\newtheorem{lemma}{Lemma}
\newtheorem{theorem}{Theorem}
\newtheorem{remark}{Remark}
\newenvironment{proof}{Proof:}
\def\endofproof {\hfill{$\Box$}\\}
\newcommand{\void}[1]{}
\newcommand{\hu}{\ensuremath{\widehat{\mathfrak{u}}}}
\newcommand{\nw}{\ensuremath{\,\mathfrak{nw}\,}}
\newcommand{\hnw}{\ensuremath{\,\widehat{\mathfrak{nw}}\,}}
\begin{document}

    \title{A no-ghost theorem for the bosonic Nappi-Witten string}
    
\author{
Gang~Chen,$^{1}$~~Yeuk--Kwan~E.~Cheung,$^{1}$~~Zheyong~Fan,$^{1}$\\
Jens~Fjelstad,$^{2}$~and~Stephen~Hwang$^{2}$
}
\affiliation{
$^{1}$Department of Physics, Nanjing University\\
22 Hankou Road, Nanjing 210098, China\\
$^{2}$Faculty of Technology and Science, Karlstad University\\
SE-651 88 Karlstad, Sweden
}

\hspace{1cm}
\begin{abstract}
We prove a no-ghost theorem for a bosonic string
propagating in Nappi-Witten spacetime.  This is  achieved in
two steps.  We first demonstrate unitarity
	for a class of
$NW/U(1)$ modules: the norm of any state which is primary with respect
to a chosen timelike $U(1)$ is non-negative. 
We then show that physical states--states satisfying the Virasoro constraints--in a class of modules of an affinisation of the Nappi-Witten algebra  are contained in the $NW/U(1)$ modules.
Similar to the case of  strings  on $AdS_3$, in order to saturate the spectrum obtained in light-cone quantization we are led to include modules with energy not bounded from below, which are related to modules with energy bounded from below by spectral flow automorphisms.
	
\end{abstract}

% insert suggested PACS numbers in braces on next line
\pacs{}

\date{\today}
\maketitle

\section{Introduction}

For a string propagating in a target space-time $\mathrm{M}$ with metric of Lorentzian signature, unitarity of the space of physical states has been shown only for relatively few choices of $\mathrm{M}$ beyond the flat case and $\mathrm{dim}(\mathrm{M})>2$. The first non-trivial case to be investigated was $\mathrm{M}=SL(2,\mathbb{R})$, where the string in conformal gauge is described by the WZW model on the group $SL(2,\mathbb{R})$. The analysis of the $SL(2,\mathbb{R})$ WZW model has so far spanned  two decades.  And although the structure of the conformal field theory is still not completely understood, the question of unitarity of the corresponding string is by now settled \cite{Hwang, Hwang2, Evans, MalOog1, MalOog2}.
Beyond the $SL(2,\mathbb{R})$-string there are some scattered results in the literature, such as a class of non-compact coset models \cite{MMS:geom, BjHw1, BjHw2}.

We add another example to this list by considering a string on the Nappi-Witten group, $NW$, a non-semisimple group with dimension $4$. It was first shown in \cite{NappiWitten} that although the group is not semisimple, there exist a non-degenerate invariant bilinear form on its Lie algebra $\nw$~\footnote{There seems to be no unversally accepted notion of the Nappi-Witten group, in this paper we will take it to be the unique connected and simply connected group with the given Lie algebra} (in fact there is, up to an overall normalisation, a one-parameter family of such forms), and there is a WZW model defined on it. An underlying assumption is that the space of states of the WZW model can be fully decomposed into certain types of representations of a centrally extended loop algebra $\hnw$ of $\nw$, namely prolongation modules 
	of unitary simple modules, and modules related to these by spectral flow automorphisms of $\hnw$. 
	We show that the space of physical string states is unitary for 
	a certain subset of these modules.
	The allowed representations include subsets of the unitary representations of the highest and lowest weight, all representations of the continuous series, the trivial representation, and all representations obtained from these by spectral flow. The proof of the no-ghost theorem follows closely the proof for the $SL(2,\mathbb{R})$ string, and there are many similarities between the two theories. 
	
In a broader context of interest the Nappi-Witten model appears as an example of an exactly solvable string theory on the maximally symmetric backgrounds of plane-polarized gravitational waves~\cite{metsaev, Blau:2001ne, Berenstein:2002jq, Russo:2002rq}.
Classically the solvability comes with the realization that such models become free in the light-cone gauge.  Nappi-Witten model, on the other hand, has the added merit of being exactly solvable at the quantum level as  a WZW model~\cite{NappiWitten}.  In fact  interactions of closed strings in such a background has been analyzed with in a covariant manner~\cite{Cheung}.  In particular correlation functions of an arbitrary number of scattering closed strings are given, utilizing a set of free-field realization of the algebra introduced also in the same paper.  Nappi-Witten space being the closest analogue of four-dimensional Minkowski spacetime yet encompassing the three-form gauge field is sure to provide more interesting playgrounds for connecting string theory with observations.

The paper is organized as follows. In section~\ref{sec:coset} we show that the states in the relevant $\hnw$-representations which are primary with respect to a chosen timelike $\hu(1)$ (i.e., states in a certain $NW/U(1)$ state space) have non-negative norm. This section follows the main idea of the corresponding procedure for $SL(2,\mathbb{R})$ presented in~\cite{Dixon}, but requires more work due to the non-semisimplicity of $\nw$. Section~\ref{sec:noghost} contains the proof that physical states, i.e. states satisfying the Virasoro constraints, are contained in the $NW/U(1)$ state space, assuming that we complete our theory with an arbitrary unitary CFT to get total central charge $26$. This section follows closely \cite{Hwang2} for the un-flowed representations, and the proof goes through with minor modifications, while the extension of the proof to the spectral flowed representations becomes relatively straightforward and is very similar to the analogous proof for $SL(2,\mathbb{R})$ \cite{MalOog1}.

\section{Unitarity of $NW/U(1)$ modules}
\label{sec:coset}

When describing a critical bosonic string in terms of a CFT, the physical string states are the states in the CFT Hilbert space satisfying the Virasoro constraints:
\begin{eqnarray}
	L_n|\psi\rangle & = & 0\text{ for } n>0\label{vir1}\\
	\left(L_0-1\right)|\psi\rangle & = & 0\text{ (on-shell condition).}
	\label{vir2}
\end{eqnarray}
For target manifolds with metric of Lorentzian signature the CFT will generically be non-unitary.  And therefore one must show that the space of physical states is unitary, i.e. contains no states with negative norm.
The no-ghost theorem for flat (Minkowski) space was originally proved in the ``old covariant quantization'' scheme by Goddard and Thorn~\cite{Goddard}, see also extended discussion in~\cite{Thorn}.
For a group manifold $G$, more precisely for
a string based on a WZW model on a non-compact group $G$ with
either exactly one compact or exactly one non-compact generator,
the proof has two parts. Part I involves showing that states in
the coset $G/U(1)$ (i.e. states in the gauged $G/U(1)$ WZW model)
have non-negative norm if the $U(1)\subset G$ subgroup is chosen to span a
timelike curve in $G$.
  
    Recall that the space of states in a gauged $G/H$ WZW model is the subspace of the space of states of the $G$ WZW model spanned by highest weight states of $H$. In a $G/U(1)$ model we are thus considering vectors in representations of $\widehat{\mathfrak{g}}_k$, $\mathfrak{g}=\mathrm{Lie}(G)$, that are annihilated by all positive modes $J_n$, $n>0$ of the $U(1)$ current $J$.
    	One can only expect unitarity of the $G/U(1)$ state space if $G$ has at most one timelike direction. For the case of $G=SO(2,1)$ this was first established in \cite{Dixon},
using in particular a certain completeness relation. More precisely one uses the fact that a projection down to $U(1)$ primaries and descendents is the identity. In the Nappi-Witten group we first show this equality in detail, finding an explicit expression for an arbitrary state as a $U(1)$ descendent.
Part II of the proof is more formal,  
and one can partly use results from \cite{Goddard}.
This subsection is devoted to Part I and we leave Part II to the next
section.

\subsection{Current algebra in Nappi-Witten space}

The Nappi-Witten group $NW$ is four dimensional group obtained by  centrally extending  the two-dimensional Poincar\'e Group.  Its Lie algebra $\nw$ can be expressed in terms of the
anti-hermitian generators, $J^{+}, J^{-}, J, T$, with commutation relations:
\begin{eqnarray}
[J^{+},~ J^{-}]=2~ i~ T, ~~[J,~ J^{+}]=i
~J^{+},~~ [J,~ J^{-}]=-i~ J^{-}, ~~[T,
~Q^{\alpha}]=0~,
\end{eqnarray}
where we use $Q^{\alpha}$ to denote an arbitrary generator.  
One important aspect of $\nw$ is that it is not semisimple~\footnote{  
See 
\cite{Sfetsos:1993rh,  Sfetsos:1993na,  Olive:1993hk, Mohammedi:1993rg, FigueroaO'Farrill:1994yf}
for a general construction of WZW models with non-semisimple algebras. 
}
and consequently admits (up to normalization) two independent invariant bilinear forms which we call $\Omega$ and $\kappa$:
\begin{align}
\Omega=\left(
\begin{array}{cccc}
 0 & 2 & 0 & 0 \\
 2 & 0 & 0 & 0 \\
 0 & 0 & 0 & 1 \\
 0 & 0 & 1 & 0
\end{array}
\right)~~~~~~~~&
\kappa=\left(
\begin{array}{cccc}
 0 & 0 & 0 & 0 \\
 0 & 0 & 0 & 0 \\
 0 & 0 & -2 & 0 \\
 0 & 0 & 0 & 0
\end{array}
\right)~. 
\end{align}
We then obtain  a family of non-degenerate invariant bilinear forms, $\Omega_b= \Omega - \frac{b}{2} \kappa$.
It is easy to see that $J+\alpha T$ is a timelike vector w.r.t. $\Omega_b$ when $b+2\alpha<0$.
 As a result, we should prove that the highest weight states of some choice of $U(1)$ current
algebra generated by $J+\alpha T$ such that $b+2\alpha<0$, i.e. states $|\Psi\rangle$ such
that $(J_{n}+\alpha T_{n})|\Psi\rangle =0$ when $n>0$, have non-negative norm.

There are four types of unitary
representations, $V_+^{p^{+},p^{-}}$,
$V_{-}^{p^{+},p^{-}}$,
$V_{\alpha}^{0,p^{-}}$, $V^{0,0}$~\cite{KirKoun}, classified by the eigenvalues of the central
generator $T=ip^{+}$ and the quadratic Casimir
$\mathcal{C}^{(2)}$ (These are also nicely summarized in Appendix B1 of~\cite{Cheung}):
\begin{itemize}
\item
The representations $V_{+}^{p^+,p^-}$ are
highest weight representations (by which we here mean
representations generated from an eigenvector of $T$ and $J$,
annihilated by $J^+$) with highest weight vector $|p^+,p^-\rangle$
such that $T|p^+,p^-\rangle=ip^+|p^+,p^-\rangle$ and
$J|p^+,p^-\rangle=ip^-|p^+,p^-\rangle$. They are unitary for
$p^+>0$, $p^-\in\mathbb{R}$, and the quadratic Casimir has the
eigenvalue $\mathcal{C}^{(2)}=-2p^+(p^-+\frac{1}{2})$. Since this class of representations will play the
main role in the rest of the paper we provide a summary of their
structure.\\
We obtain bases $\{|r\rangle\}_{r\in\mathbb{N}_0}$ of
the vector spaces $V_+^{p^+,p^-}$
 by acting with $J^{-}$ on the vacua
$|0\rangle:=|p^{+},p^{-}\rangle$ for fixed $p^+$, $p^-$, which are annihilated by $J^{+}$.
\begin{eqnarray}
|r\rangle=(-iJ^{-})^{r}|0\rangle
\end{eqnarray}
 The generators act on the
representation as
\begin{eqnarray}
T|r\rangle &=&ip^{+}|r\rangle \\
J^{+}|r\rangle &=& 2ip^{+}r|r-1\rangle \\
J^{-}|r\rangle &=&i|r+1\rangle \\
J|r\rangle &=&i(p^{-}-r)|r\rangle.
\end{eqnarray}
There is an inner product on $V_+^{p^+,p^-}$ that takes the following form on the basis elements
\begin{eqnarray}
\langle s|r\rangle=(2p^{+})^{r}r! \delta_{r,s}.
\end{eqnarray}

\item
There are also lowest weight representations, $V_{-}^{p^+,p^-}$ generated from a lowest weight vector (annihilated by $J^-$) $|p^+,p^-\rangle_-$. These are unitary for $p^+<0$, $p^-\in\mathbb{R}$. It can be shown \cite{Cheung} that there exists a non-degenerate invariant pairing between $V_{+}^{p^+,p^-}$ and $V_{-}^{p^+,p^-}$, so the latter is equivalent to the contragredient of the former. $\mathcal{C}^{(2)}$ takes the value $-2p^+(p^--\frac{1}{2})$ on $V_{-}^{p^+,p^-}$.

\item
The representations $V^{0,p^-}_\alpha$ have neither highest nor lowest weight vectors, and are characterized by $p^-\in[0,1)$ and $\alpha\in\mathbb{R}$ such that not both are zero simultaneously, where the quadratic casimir takes the value $\mathcal{C}^{(2)}=-\alpha^2$. In analogy with $SL(2,\mathbb{R})$ we will refer to this class of representations as the continuous series.

\item
Finally, the representation $V^{0,0}$ is the trivial representation.
\end{itemize}
    
The Nappi-Witten currents obey the
operator product relations
\begin{eqnarray}
~J^{+}(z) J^{-}(w)&\sim& \frac{2iT(w)}{z-w} +\frac{k}{(z-w)^{2}}, \\
~J(z) J^{+}(w)&\sim& \frac{iJ^{+}(w)}{z-w},\\
~J(z) J^{-}(w)&\sim& \frac{-iJ^{-}(w)}{z-w},\\
~J(z) T(w)&\sim&
\frac{\frac{1}{2}k}{(z-w)^{2}},\\
~J(z) J(w)&\sim& \frac{\frac{1}{2}k
b}{(z-w)^{2}},
\end{eqnarray}
where we assume the central element $k\in\mathbb{R}$.
By standard manipulations, we can convert them
into commutation relations for the Laurent modes
\begin{eqnarray}
~[J^{+}_{m},J^{-}_{n}]&=&2iT_{m+n}+k m\delta_{m+n},\label{nwcomfirst}\\
~[J_{m},J^{+}_{n}]&=&iJ^{+}_{m+n},\\
~[J_{m},J^{-}_{n}]&=&-iJ^{-}_{m+n},\\
~[J_{m},T_{n}]&=&\frac{k}{2} m\delta_{m+n},\\
~[J_{m},J_{n}]&=&\frac{k b}{2} m\delta_{m+n}.\label{nwcomlast}
\end{eqnarray}

The OPE of the energy-momentum tensor $\mathcal{E}(z)$ with a primary operator $Q(w)$ of conformal dimension one is
\begin{eqnarray}
~Q(z) \mathcal {E}(w)&\sim&
\frac{Q(w)}{(z-w)^{2}},
\end{eqnarray}
so demanding that the currents are Virasoro primaries of conformal dimension one restricts the ``Sugawara form'' of the energy-momentum tensor to
\begin{equation}
\mathcal{E}(z)=\frac{1}{k} \left(JT(z)+TJ(z)+\frac{J^{+}J^{-}(z)
+J^{-}J^{+}(z)}{2}+(\frac{2}{k}-b)TT(z)\right).
\end{equation}
The commutation relations of Virasoro generators with current
modes are
\begin{eqnarray}
~[L_{m},Q^{\alpha}_{n}]&=&-n Q^{\alpha}_{m+n}.
\end{eqnarray}
A standard calculation confirms that the central charge takes the value $c=4$, independently of the values of $k$ and $b$.
It is straightforward to check that one can set the parameter $b$ in $\Omega_b$ to any real value by an automorphism of $\nw$, and we thus expect that different values of $b$ correspond to equivalent CFT's. See Remark \ref{rem:bvalues} at the end of section \ref{sec:cosetunitarity}.
For simplicity we will work with the case $b=2/k$, such that the energy-momentum tensor has no term $TT(z)$.  
For the rest of the paper the name $\hnw$ denotes solely the Lie algebra given in (\ref{nwcomfirst})--(\ref{nwcomlast}) with $b=2/k$.

\subsection{Constraints from unitarity at low excitations}\label{sec:lowNunitarity}
    The representations that we consider as candidates for the space of states of the Nappi-Witten WZW model are the prolongation modules $\widehat{V}_{+}^{p^+,p^-}$, $\widehat{V}_{-}^{p^+,p^-}$ and $\widehat{V}_\alpha^{0,p^-}$ of $V_{+}^{p^+,p^-}$, $V_{-}^{p^+,p^-}$ and $V_\alpha^{0,p^-}$, respectively (plus the spectral flow of these representations, see section III). These are the $\hnw$-modules obtained by the requirement that any vector in the (horizontal) $\nw$-module is annihilated by the positive modes of the currents, $J^{\pm}_n$, $J_n$, $T_n$, $n>0$. Note that, similar to the $SL(2,\mathbb{R})$-case, the prolongation modules constructed from $V_{+}^{p^+,p^-}$ and $V_{-}^{p^+,p^-}$
     for $p^+\neq 0$ (i.e. the highest resp. lowest weight modules) are in fact equivalent to the Verma modules of $\hnw$ with highest resp. lowest weights $(p^+,p^-)$.\\

Some interesting observations follow by analyzing the Virasoro condition $(L_0-1)|\psi\rangle=0$. Acting with $L_0$ on a state $|\psi\rangle$ at level (mode number) $N$ in $\widehat{V}_{+}^{p^+,p^-}$ gives
\begin{equation}
    L_0|\psi\rangle=\left(\frac{-2p^+(p^-+\frac{1}{2})}{k}+N\right)|\psi\rangle
\end{equation}
Applying the Virasoro condition $(L_0-1)|\psi\rangle=0$, we see that, since the only requirement on $p^-$ is that it is real, this imposes no obvious constraint on the allowed representations. An analogous conclusion follows from looking at a lowest weight representation.
If instead $|\psi\rangle\in\widehat{V}^{0,p^-}_\alpha$, the same Virasoro condition reads
\begin{equation}
    \left(\frac{-\alpha^2}{k}+N-1\right)|\psi\rangle=0.
\end{equation}

If $k>0$ we may potentially  find for any $N\in\mathbb{N}$ a representation with a physical state at that level.
 If $k<0$, however, the only representations where excited states may be physical are those where $\alpha=0$. Furthermore, these representations only allow $N=1$. It is easily seen that as long as $p^-\neq 0$ the only states in such a representation that satisfy the constraints $L_n|\psi\rangle=0$, $n>0$, are the states $T_{-1}|\chi\rangle$ where $|\chi\rangle\in V^{0,p^-}_0$. It is equally easy to see that all of these states have zero norm. If $\alpha\neq 0$ then only vectors in the horisontal representation may satisfy the Virasoro conditions, and these have positive norm. Note that these conclusions still hold if we include a contribution to $L_0$ from another unitary CFT since such contributions are necessarily positive. To conclude, we have shown that if the level $k$ is negative, all physical states in the modules $\widehat{V}^{0,p^-}_\alpha$ have non-negative norm.\\
    
Next, consider the norms of some of the first excited string states in $\widehat{V}_+^{p^+,p^-}$.
\begin{eqnarray}
\|-iJ^{-}_{-1}|r\rangle\|^{2}&=&-\langle
r|[J^{+}_{1},J^{-}_{-1}]+J^{-}_{-1}J^{+}_{1}|r\rangle
=(2p^{+}-k)\langle r|r \rangle\\
\|-iJ^{+}_{-1}|r\rangle\|^{2}&=&-\langle
r|[J^{-}_{1},J^{+}_{-1}]+J^{+}_{-1}J^{-}_{1}|r\rangle
=(-2p^{+}-k)\langle r|r \rangle\\
\|-i(J_{-1}+T_{-1})|r\rangle\|^{2}&=&-\langle
r|J_{1}J_{-1}+T_{1}J_{-1}+J_{1}T_{-1}|r\rangle
=(-k-1)\langle r|r \rangle\\
\|-i(J_{-1}-T_{-1})|r\rangle\|^{2}&=&-\langle
r|J_{1}J_{-1}-T_{1}J_{-1}-J_{1}T_{-1}|r\rangle =(k-1)\langle r|r
\rangle
\end{eqnarray}
The first observation is that there is no possible value of $k$ such that none of these states have negative norm. Since $p^+>0$, the first two states can have non-negative norm only if $k<0$ and $p^+\leq-\frac{k}{2}$. In fact these conditions are necessary since the state $J^+_{-1}|p^+,p^-\rangle$ is physical for suitable values of $p^{-}$. It then follows that the last two states both have negative norm if $k\in (0,-1)$, and if $k\leq -1$ only the last state has negative norm. It is then natural to guess that we should also require $k\leq-1$. Let us, however, investigate this a bit further, and temporarily let the parameter $b$ be arbitrary. Consider the combination $R_{-1}=J_{-1}+\alpha T_{-1}$ for some $\alpha\in \mathbb{R}$, we then have
$$\| R_{-1}|r\rangle\|^2=-\frac{k}{2}(b+2\alpha)\langle r|r\rangle.$$
It follows that for any $b\in\mathbb{R}$ there are combinations of the currents $J$ and $T$ producing positive norm states, and other producing negative norm states. The picture in the $J-T$ plane with a metric given by $\Omega_{b}$ is as follows. For $b=0$ the lightcone is aligned with the $J$ and $T$ axes. For $b>0$ the lightcone is compressed, and when $b>2$ both combinations $J\pm T$ are spacelike. Similarily when $b<0$ the lightcone is widened, and for $b<-2$ both combinations $J\pm T$ are timelike. For $b=\pm 2$ the lighcone is aligned with the $T$ and $J\mp T$ axes. With our choice $b=2/k$ we thus get that both combinations $J\pm T$ are timelike if $k\in (-1,0)$, but there are still other combinations that are spacelike.  Demanding unitarity at the first excited level thus only gives the necessary conditions
\begin{eqnarray}
p^{+}&\leq &-\frac{k}{2}\label{uncond1}\\
k&<&0.\label{uncond2}
\end{eqnarray}
		
\subsection{Unitarity of $NW/U(1)$ states}
\label{sec:cosetunitarity}

We will now show that the conditions (\ref{uncond1}) and (\ref{uncond2}) are sufficient to ensure the absence of negative norm states in the coset module built from $\widehat{V}_+^{p^+,p^-}$ or $\widehat{V}_{-}^{p^+,p^-}$, i.e. the subspace spanned by states which are primary with respect to the timelike $\hu(1)$ current algebra generated by the modes $\{J_n-T_n\}_{n\in \mathbb{Z}}$. For brevity we only consider the case of $\widehat{V}_+^{p^+,p^-}$, the same result follows with obvious modifications for $\widehat{V}_{-}^{p^+,p^-}$.

\noindent
Let $|N; r, p^{+}, p^{-}\rangle$
be a general state, where $N$ is the excited
string level
and $r$ is defined by
$J_0|N;r,p^+,p^-\rangle=i(p^--r)|N;r,p^+,p^-\rangle$.
Such a state  obeys
\begin{equation}
\langle N; r, p^{+}, p^{-}|L_{0}|N; r, p^{+},
p^{-}\rangle
=(-\frac{2p^{+}(p^{-}+\frac{1}{2})}{k}+N)\langle
N; r, p^{+}, p^{-}|N; r, p^{+}, p^{-}\rangle.
\label{L01}
\end{equation}
Since
\begin{eqnarray}
L_{0}
&=&\frac{1}{k}[\frac{1}{2}(J^{+}_{0}J^{-}_{0}
+J^{-}_{0}J^{+}_{0})+(J_{0}T_{0}+T_{0}J_{0}){}\nonumber\\
&&{}+\sum_{n\geq 1}(J^{+}_{-n}J^{-}_{n}
+J^{-}_{-n}J^{+}_{n}+2 J_{-n}T_{n}+2
T_{-n}J_{n})], \end{eqnarray} we find that
\begin{eqnarray}
\langle N; r|L_{0}|N; r \rangle
&=&\frac{1}{k}\langle N; r|J^{-}_{0}J^{+}_{0}
+\sum_{n\geq 1}(J^{+}_{-n}J^{-}_{n}
+J^{-}_{-n}J^{+}_{n}+2 J_{-n}T_{n}+2
T_{-n}J_{n})|N; r\rangle {}\nonumber\\
&&{}+\frac{1}{k}(-p^{+}+2p^{+}r-2p^{+}p^{-})\langle N; r|N;
r\rangle. \label{L02}
\end{eqnarray}
From (\ref{L01}) and (\ref{L02}) we get
\begin{eqnarray}
\langle N; r|N; r\rangle=\frac{\langle N; r|J^{-}_{0}J^{+}_{0}
+\sum_{n\geq 1} (J^{+}_{-n}J^{-}_{n} +J^{-}_{-n}J^{+}_{n}+2
J_{-n}T_{n}+2 T_{-n}J_{n})|N; r\rangle} {N
k-2p^{+}r}\label{hwnorm}
\end{eqnarray}
As follows from (\ref{uncond1}), (\ref{uncond2}), and $r\geq -N$,
the denominator is manifestly negative for
$k<0$. We prove by induction that the numerator is also
negative.

The inductive step amounts to showing that the numerator is negative for given values of $N$ and $r$ if it is negative for all excitation levels less than $N$ and $J_0$ eigenvalues smaller than $r$ (note that $r$ is bounded from below by $-N$). For $N=0$ all states are $U(1)$-primaries, and those states form a unitary $\nw$-module. The only state with $r=-N$ is $(J^+_{-1})^N|0;0,p^+,p^-\rangle$, and it is easily checked that this state is a $U(1)$ primary, and that it has positive norm.
The proof of the inductive step follows the procedure described in \cite{Dixon}, where it was applied to $SL(2,\mathbb{R})$. The idea is to express the norm of a state of the form $J^a_{n}|N;r\rangle$, $n>0$, in terms of norms of states with lower $N$ and larger $r$. In the case of $SL(2,\mathbb{R})$ it follows easily since any state is a linear combination of descendents of primaries of the timelike $\hu(1)$. In our case, since $\nw$ is not semisimple, this does not follow trivially. As a first step we therefore show explicitly that any state in $\widehat{V}_{+}^{p^+,p^-}$ can itself be written as a combination of $\hu(1)$-descendent states. The same statement holds, with a similar proof, for $\widehat{V}_{-}^{p^+,p^-}$.

\begin{lemma}
\label{lem:completeness}

Any state $|\Psi\rangle$ in $\widehat{V}_{+}^{p^+,p^-}$ can be written
\begin{eqnarray}
|\Psi\rangle &=& |h^0\rangle
+\sum_{n>0}(J_{-n}-T_{-n})|h^1_{n}\rangle
\nonumber\\&+&\sum_{n_{1},n_{2}>0}
(J_{-n_{1}}-T_{-n_{1}})(J_{-n_{2}}-T_{-n_{2}})
|h^2_{n_{1},n_{2}}\rangle +\ldots\label{expansion}
\end{eqnarray}
where the states $|h^i_{n_1,\ldots,n_i}\rangle$ are annihilated by any $J_n-T_n$ for $n>0$, and only finitely many terms on the right hand side are non-zero.
\end{lemma}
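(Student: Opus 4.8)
The plan is to recognise the modes $R_n := J_n - T_n$ that generate the timelike $\hu(1)$ as a single \emph{nondegenerate} Heisenberg algebra, and then to prove that $\widehat{V}_+^{p^+,p^-}$ is, as a module over this Heisenberg algebra, freely generated by the creation modes $R_{-n}$ ($n>0$) acting on the space of $\hu(1)$-primaries. First I would compute the relevant bracket: using $b=2/k$ together with (\ref{nwcomfirst})--(\ref{nwcomlast}) and $[T_m,T_n]=0$ one finds $[R_m,R_n]=(1-k)\,m\,\delta_{m+n}$. The central coefficient $1-k$ is nonzero (indeed $1-k>0$ since $k<0$ by (\ref{uncond2})), so each $R_{-n}$ is a bona fide creation operator with $[R_n,R_{-n}]=(1-k)\,n\neq 0$. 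This nondegeneracy is the single piece of input that the whole argument rests on.

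Next I would use the grading of $\widehat{V}_+^{p^+,p^-}$ by excitation level $N$. Since $R_n$ lowers $N$ by $n$, any state $|\Psi\rangle$ at level $N$ satisfies $R_n|\Psi\rangle=0$ for $n>N$, so the module is a restricted Heisenberg module. Statement (\ref{expansion}) is then precisely the assertion that the natural map from (polynomials in the $R_{-n}$) tensored with the space $\Omega=\{\,|h\rangle : R_n|h\rangle=0\ \forall\,n>0\,\}$ of $\hu(1)$-primaries into $\widehat{V}_+^{p^+,p^-}$ is surjective, the level grading forcing only partitions of $N$ to contribute and hence only finitely many nonzero terms. This is the standard Fock-space decomposition of a nondegenerate Heisenberg algebra, and I would establish the surjectivity constructively by induction on $N$ so as to produce the explicit nested form.

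For the induction: at level $N=0$ the horizontal vectors $|r\rangle$ are annihilated by all positive current modes, hence lie in $\Omega$ and supply $|h^0\rangle$. Given $|\Psi\rangle$ at level $N>0$, the states $R_n|\Psi\rangle$ for $0<n\le N$ sit at strictly lower level, so by the induction hypothesis each is already an $R$-descendant of primaries. I would then subtract a descendant correction, $|\Psi_0\rangle=|\Psi\rangle-\sum_{n>0}\tfrac{1}{(1-k)n}\,R_{-n}\!\left(R_n|\Psi\rangle\right)+(\text{higher corrections})$, with the coefficients fixed by inverting the pairing $[R_n,R_{-n}]=(1-k)n$; a short check (using $R_nR_{-n}=R_{-n}R_n+(1-k)n$) shows the leading term cancels $R_n|\Psi\rangle$, and the higher corrections remove the residual positive-mode action so that the remainder $|\Psi_0\rangle$ lies in $\Omega$. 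Consistency of removing all positive modes at once is guaranteed by $[R_m,R_n]=0$ for $m,n>0$; re-expanding the subtracted descendants through the induction hypothesis then reproduces exactly (\ref{expansion}).

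The main obstacle I anticipate is organising this subtraction explicitly rather than abstractly, which is where the non-semisimplicity of $\nw$ makes the bookkeeping heavier than for $SL(2,\mathbb{R})$. In a PBW basis ordered in $J^{\pm}_{-n},J_{-n},T_{-n}$, the $T_{-n}$-excitations are not $\hu(1)$-primary: for instance $R_1 T_{-1}|r\rangle=\tfrac{k}{2}|r\rangle\neq 0$, so $T_{-1}|r\rangle$ must be split as $\tfrac{k}{2(1-k)}R_{-1}|r\rangle$ plus a genuine primary. One therefore has to track how the primary projection interacts with the non-semisimple horizontal action of $J^{\pm},J,T$ and verify at each stage that $\Omega$ is preserved, so that the recursion closes on states of lower level. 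The nondegeneracy $1-k\neq 0$, automatic for $k<0$, is precisely what makes every such split well defined and lets the induction terminate.
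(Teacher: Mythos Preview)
Your approach is correct and takes a genuinely different route from the paper. You recognise that the timelike modes $R_n=J_n-T_n$ form a nondegenerate Heisenberg algebra (central coefficient $1-k\neq 0$), and then invoke the standard structure theorem for restricted Heisenberg modules: the level grading forces $R_n$ ($n>0$) to act locally nilpotently, so $\widehat{V}_+^{p^+,p^-}$ is freely generated over $\mathbb{C}[R_{-1},R_{-2},\ldots]$ by its subspace of $\hu(1)$-primaries. Your inductive subtraction is the usual proof of that theorem, and the ``higher corrections'' are just the normal-ordered exponential $:\!\exp\bigl(-\sum_{n>0}\frac{1}{(1-k)n}R_{-n}R_n\bigr)\!:$ acting on $|\Psi\rangle$; since positive modes commute among themselves and each is nilpotent on a fixed level, this terminates and produces the primary component.

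The paper proceeds instead by a generator-by-generator analysis: it first reduces to showing that $Q_{-m}|h\rangle$ with $Q\in\{J^+,J^-,J,T\}$ and $|h\rangle$ primary has the required form, handles $Q=J,T$ via the split into $J_{-m}-T_{-m}$ and the orthogonal combination $J_{-m}+(1-\tfrac{2}{k})T_{-m}$, and for $Q=J^\pm$ builds an explicit primary $|h_{(m+N)}\rangle$ through the closed formula (\ref{genprim}). What the paper's route buys is a concrete expression for the primary component in terms of the original $\hnw$-generators, potentially useful for later calculations; what your route buys is a short, conceptual proof that works uniformly for any timelike $\hu(1)$ with nonvanishing self-pairing. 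Your final paragraph slightly undersells your own argument: in the Heisenberg-module approach there is no need to ``track how the primary projection interacts with the non-semisimple horizontal action of $J^\pm,J,T$'', since the decomposition uses only the $R_n$ and the level grading, and the non-semisimplicity of $\nw$ is invisible. That bookkeeping burden belongs to the paper's construction, not to yours.
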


\begin{proof}
    We first show that the Lemma follows if we can
    show that any state obtained by acting with one
    of the operators
    $Q_{-m}\in\{J_{-m}^{+},J_{-m}^{-},J_{-m},T_{-m}\}$,
    where $m>0$, on a $\hu(1)$-primary
     state can be written on the required form. Any state
   at level $N=0$ is already a
   $\hu(1)$ primary state. Any state
   at level $N=1$ is obtained as a linear combination of
   states of the form $Q_{-1}|h\rangle$ where $|h\rangle$
   is a state at level $N=0$. The statement follows by
   induction since any state at level $N_0$ is a linear
   combination of states of the form
   $Q_{-m}|\Phi_N\rangle$, and if we assume that we can
   write any state at level $N<N_0$ on the required form,
   then we have
    $$Q_{-m}|\Phi_N\rangle=Q_{-m}|h^0\rangle+
    \sum_{n>0}Q_{-m}(J_{-n}-T_{-n})|h^1_n\rangle
    +\ldots$$
Using the commutation relations of $\hnw$
we can re-write this expression as a sum of terms where
every term consists of some factors of negative
$\hu(1)$-modes acting on a state
$Q_{-s}|h\rangle$ for some $s>0$, where $|h\rangle$
is a $\hu(1)$-primary state, and the
induction is complete.

\noindent
We must now show that any state of the form $Q_{-m}|h\rangle$,
where $|h\rangle$ is a primary state, can be written on
the required form.
For $Q=J$ or $T$ this follows trivially. Note that the combination
$J_{-m}+(1-\frac{2}{k})T_{-m}$ commutes with any timelike
generator $J_n-T_n$. The decompositions
\begin{eqnarray}
J_{-m}&=&\frac{k}{2(k-1)}(J_{-m}+(1-\frac{2}{k})T_{-m})
+\frac{k-2}{2(k-1)}(J_{-m}-T_{-m}),\label{Jdec}\\
T_{-m}&=&\frac{k}{2k-2}
(J_{-m}+(1-\frac{2}{k})T_{-m})
-\frac{k}{2k-2}(J_{-m}-T_{-m})\label{Tdec}
\end{eqnarray}
thus immediately put $Q_{-m}|h\rangle$ on the required
form. It remains to investigate when $Q=J^\pm$. For
simplicity we restrict to $J^-$, the other case follows
analogously. 
Our strategy is to recursively construct the first term on the right hand side of (\ref{expansion}), and the result will be of a form where the Lemma becomes obvious.
Consider therefore a state
$J^-_{-m}|h_N\rangle$ with $|h_N\rangle$
a $\hu(1)$ primary state at level $N$.
Note that $J^-_N|h_N\rangle$ is either zero or a
$\hu(1)$ primary state, and that the
state $|h_{(1)}\rangle$ defined as
$$|h_{(1)}\rangle:= \left(J^-_{N-1}+\frac{2i}{k}T_{-1}J^-_N\right)|h_N\rangle$$ is a primary state. Thus, using (\ref{Tdec}) we see that $J^-_{N-1}|h_N\rangle=|h_{(1)}\rangle-\frac{2i}{k}T_{-1}J^-_N|h_N\rangle$ has the required form. Similarly, the state $|h_{(2)}\rangle$ defined by
$$|h_{(2)}\rangle:=\left(J^-_{N-2}+\frac{2i}{k}T_{-1}J^-_{N-1}
+\frac{1}{2}\left(\frac{2i}{k}\right)^2(T_{-1})^2J^-_N+
\frac{2i}{2k}T_{-2}J^-_N\right)|h_N\rangle$$
is primary, so using (\ref{Tdec}) and the decomposition of
$J^-_{N-1}|h_N\rangle$ we see that  $J^-_{N-2}|h_N\rangle$ has the
required form. Recursively one can then show that any state
$J^-_{-m}|h_N\rangle$ can be written on the required form by
constructing a primary state $|h_{(m+N)}\rangle$ given by the
general expression:
\begin{eqnarray}   \label{genprim}
  \displaystyle{
    |h_{(m+N)}\rangle }
    & := &  \displaystyle{ \sum_{n_1,n_2,\ldots,n_{m+N}\geq 0}\frac{1}{n_1!\cdots n_{m+N}!}\left(\frac{2i}{k}\right)^{n_1}\left(\frac{2i}{2k}\right)^{n_2}\cdots\left(\frac{2i}{(m+N)k}\right)^{n_{m+N}}
         }
     \nonumber \\
    & \times &   
    \displaystyle{ \left(T_{-1}\right)^{n_1}\cdots\left(T_{-m-N}\right)^{n_{m+N}}J^-_{-m+n_1+2n_2+\ldots+(m+N)n_{m+N}}|h_N\rangle~. }
   \end{eqnarray}
There are only a finite number of non-zero terms on the right hand side since $N<\infty$.
The first term in this expression is $J_{-m}^-|h_N\rangle$, and using recursively the same expression together with (\ref{Tdec}) on every other term one obtains the required form for $J_{-m}^-|h_N\rangle$.
By changing every factor $(2i)^{n_j}$ to $(-2i)^{n_j}$, and $J^-_{\ldots}$ to $J^+_{\ldots}$ in (\ref{genprim}) one obtains the analogous formula for $J^+_{-m}|h_N\rangle$.
This completes the proof of the Lemma.\endofproof
\end{proof}

Let $\mathcal{H}^{p^+,p^-}\subset\widehat{V}^{p^+,p^-}_+$ denote the subspace spanned by $\hu(1)$-primary states. Choose a linear projector $\mathbf{P}$ on $\widehat{V}^{p^+,p^-}_+$ down to $\mathcal{H}^{p^+,p^-}$, implying that $(J_n-T_n)\mathbf{P}=0$ for all $n>0$. It is clear from Lemma \ref{lem:completeness} that we can choose $\mathbf{P}$ such that also $\mathbf{P}(J_{-n}-T_{-n})=0$ for all $n>0$, and one can show that this implies $\mathbf{P}^\dagger=\mathbf{P}$.
Using Lemma~\ref{lem:completeness} it is then not difficult to show the completeness relation
\begin{eqnarray}  
\label{compl}
\mathrm{id}_{\widetilde{V}^{p^+,p^-}}
&=&\textbf{P}+\sum_{n>0}\frac{1}{(-(k-1) n)} X_{-n}\textbf{P}X_n{}\nonumber \\ 
&&+\frac{1}{2!}\sum_{n_{1},n_{2}>0} \frac{1}{(-(k-1)
n_{1})}\frac{1}{(-(k-1)
n_{2})}X_{-n_1}X_{-n_2}\textbf{P} X_{n_1}X_{n_2}
{}\nonumber\\ 
&&{} +\cdots
\end{eqnarray}
where $X_n:=J_n-T_n$.
The idea is now to insert (\ref{compl}) into each term on the right hand side of (\ref{hwnorm}), affording a sum over intermediate states.
Consider first the effect of inserting (\ref{compl}) in the middle of the term $\langle N,r|J^+_{-p}J^-_p|N,r\rangle$. After commuting the factors of $(J-T)$ to the left
and right, a typical term in the resulting sum is
\begin{eqnarray}
\langle N, r|[\frac{1}{m!}(k-1)^{-m}
\frac{1}{n_{1}\cdots n_{m}}
J^{+}_{-(p+n_{1}+\cdots+n_{m})} \textbf{P}
J^{-}_{(p+n_{1}+\cdots+n_{m})}]|N, r\rangle
\end{eqnarray}
We define $P=p+n_{1}+\cdots+n_{m}$. The properties of $\mathbf{P}$ implies $$\langle N,r|J^+_{-P}\mathbf{P}J^-_{P}|N,r\rangle=-\|\mathbf{P}J^-_P |N,r\rangle\|^2,$$
where $J^-_P|N,r\rangle$ has level $N-P<N$ if $P>0$. If every primary state of level strictly lower than $N$ has non-negative norm it follows that $\langle N,r|J^+_{-P}\mathbf{P}J^-_P|N,r\rangle\leq 0$. The same result follows by inserting (\ref{compl}) in the expression $\langle N,r|J^-_{-p}J^+_p|N,r\rangle$ for $p>0$.
Inserting the completeness relation in the term $\langle N,r|J^-_0J^+_0|N,r\rangle$ gives a sum of terms with the same property as above, plus the term $-\|\mathbf{P}J^+_0|N,r\rangle\|^2$. The state $J^+_0|N,r\rangle$ also has level $N$, but it has lower $J_0$-eigenvalue, and we can assume that all primaries at level $N$ but with $J_0$-eigenvalue lower than $r$ have non-negative norm it follows that $\langle N,r|J^-_0\mathbf{P}J^+_0|N,r\rangle\leq 0$.
Finally, using (\ref{Jdec}) and (\ref{Tdec}) and the notation $Y_n=J_n+(1-\frac{2}{k})T_n$, it follows that
$$\langle N,r| 2T_{-n}J_n+2J_{-n}T_n|N,r\rangle=\left(1-\frac{1}{k}\right)^{-1}\langle N,r|Y_{-n}Y_n|N,r\rangle,$$
where $Y_n|N,r\rangle$ is again a $\hu(1)$ primary state.
Summing up the terms in (\ref{hwnorm}), we find that the
operators between in-state and out-state are of
the form
\begin{eqnarray}
{\bf\mathcal{O}} 
&=&\sum_{P\geq 1} \sum_{m\geq0}\sum_{n_{1}\geq0,\cdots, n_{m}\geq0}^{n_{1}+\cdots +n_{m}\leq P-1}
\frac{1}{m!}(k-1)^{-m} \frac{1}{n_{1}\cdots
n_{m}} J^{+}_{-P} \textbf{P} J^{-}_{P}   \nonumber
\\ 
&&+ ~\sum_{P\geq 0} \sum_{m\geq 0} \sum_{n_{1}\geq0,\cdots, n_{m}\geq0}^{n_{1}+\cdots +n_{m}\leq P}
\frac{1}{m!}(k-1)^{-m} \frac{1}{n_{1}\cdots n_{m}} J^{-}_{-P} \textbf{P} J^{+}_{P}
 \nonumber
 \\ 
&&+~\left(1-\frac{1}{k}\right)^{-1}\sum_{n\geq 1}Y_{-n}Y_n\nonumber \\
&=&\sum_{P\geq 1}F_{P}(k)J_{-P}^{+}\textbf{P} J_{P}^{-} + \sum_{P\geq0}F_{P-1}(k)J_{-P}^{-}\textbf{P} J_{P}^{+} +
\left(1-\frac{1}{k}\right)^{-1}\sum_{n\geq 1}Y_{-n}Y_n~,
\end{eqnarray}
where $F_{P}(k)=(1+\frac{1}{(k-1)})(1+\frac{1}{2(k-1)})\cdots (1+\frac{1}{P (k-1)})$ for $P>0$ and $F_0(k)=1$. 
In particular $F_P(k)>0$ for $k<0$, as well as $(1-1/k)>0$.
Since all states at $N=0$ as well as the $\hu(1)$-primary states with $r=-N$ have non-negative norm, it follows by induction that the numerator of (\ref{hwnorm}) is negative for all $N\in \mathbb{N}_0$ and $r\geq -N$.
Hence, we conclude that there are
no negative norm states in $\mathcal{H}^{p^+,p^-}\subset\widehat{V}_+^{p^+,p^-}$ when $p^{+}\leq
-\frac{k}{2}$ and $k<0$. The same result holds in the case of $\widehat{V}_-^{p^+,p^-}$ for $\frac{k}{2}\leq p^+$. 
This completes the
first part of the proof of our no-ghost theorem for bosonic strings on the
Nappi-Witten group.
\begin{remark}\label{rem:bvalues}
Denote by $\hnw_b$ the algebra (\ref{nwcomfirst})--(\ref{nwcomlast}) with an arbitrary $b\in\mathbb{R}$. There is an automorphism $J_0\mapsto J_0+\mu T_0$ of the horisontal subalgebra that  shifts the value of b according to $b\mapsto b+2\mu$, which we took as a sign that different values of $b$ correspond to equivalent CFT's. The argument can be made stronger by considering a change of basis of $\hnw_b$ where $J_n$ is replaced with $\widetilde J_n=J_n+\mu T_n$. This is not an automorphism, but changes only the commutator (\ref{nwcomlast}) to
$$[\widetilde J_m,\widetilde J_n]=\frac{1}{2}k(b+2\mu)m\delta_{m+n}.$$
In other words, we have shown that $\hnw_b\cong\hnw_{b+2\mu}$ for any $\mu\in\mathbb{R}$.
When discussing highest weight modules we have used the triangular decomposition $\hnw_b=\hnw_b^{-}\oplus\hnw_b^{0}\oplus\hnw_b^{+}$ where all positive (negative) modes are in the $+$ ($-$) parts, and in addition $J^-_0\in\hnw_b^{-}$, $J^+_0\in\hnw_b^+$. Note that the change of basis above preserves the triangular decomposition. This implies in particular that the $\hnw_b$-module $\widehat{V}_\pm^{p^+,p^-}$ is the same thing as the $\hnw_{b+2\mu}$-module $\widehat{V}_\pm^{p^+,p^-+\mu p^+}$. Using this fact it is not difficult to check explicitly that for any $b=\frac{2}{k}+2\mu$ such that the vector $J-T$ is still timelike, the proof of unitarity above still holds. For other values of $b$ one must choose a different $\hu(1)$ subalgebra, but this is straightforward. With the same constraint on $b$ one can show that also the results of the next section holds, and we expect similarily that this extends straightforwardly to arbitrary values of $b$.
\end{remark}

\section{No-ghost theorem for Nappi-Witten string theories}\label{sec:noghost}
The idea of the proof of the no-ghost theorem follows \cite{Hwang2}, see also \cite{Evans}, using also classic results by Goddard and Thorn \cite{Goddard}.

\noindent
Denote by $\widehat{V}_{+(N)}^{p^+,p^-}$ the subspace of $\widehat{V}_+^{p^+,p^-}$ spanned by states of excitation level $N$ or less. Furthermore, let $\mathcal{T}\subset\mathcal{H}^{p^+,p^-}$ be the subspace of $\hu(1)$ primary states that are also Virasoro primary, i.e. annihilated by $L_n$ with $n>0$. Use the notation $X_n:=J_n-T_n$ for the generators of the timelike $\hu(1)$. For the rest of this section it is important that we complete the Nappi-Witten CFT with a unitary CFT with $c=22$ such that the total central charge takes the value $26$.
\begin{lemma}\label{lem:basis}
The set of states of the form
\begin{equation}
|\{\lambda,\mu\},\psi\rangle:=L_{-1}^{\lambda_1}\cdots L_{-p}^{\lambda_p}X_{-1}^{\mu_1}\cdots X_{-p}^{\mu_p}|\psi\rangle
\label{basis}
\end{equation}
where $|\psi\rangle\in\mathcal{T}$ has level $N_\psi\leq N$ and $\sum_{a=1}^p a(\lambda_a+\mu_a)+N_\psi=N$ for a fixed $N$, form a basis for $\widehat{V}_{+(N)}^{p^+,p^-}$
	if $p^+\in (0,-\frac{k}{2})$.
\end{lemma}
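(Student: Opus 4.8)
The plan is to exhibit $\widehat{V}_{+}^{p^+,p^-}$ as a module, freely generated by the doubly-primary space $\mathcal{T}$, over two commuting subalgebras: the Heisenberg algebra of the timelike current $X_n$ and a coset Virasoro algebra. Writing $L^X_n$ for the Sugawara tensor of $X_n=J_n-T_n$ (its level $[X_m,X_n]=(1-k)m\delta_{m+n}$ is non-zero since $k<0$), the fact that $X$ has conformal weight one makes $\widetilde{L}_n:=L_n-L^X_n$ a Virasoro algebra of central charge $c-1=3$ that commutes with every $X_m$. Any $|\psi\rangle\in\mathcal{T}$ is killed by $X_n$ and $L_n$ for $n>0$, hence by $L^X_n$ and $\widetilde{L}_n$ as well, so $\mathcal{T}$ is precisely the space of joint lowest-weight vectors for the commuting pair (Heisenberg of the $X_n$) $\times$ (coset Virasoro of the $\widetilde{L}_n$). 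Because $L_{-n}=\widetilde{L}_{-n}+L^X_{-n}$ with $L^X_{-n}$ quadratic in the $X$-modes, the monomials in $\{L_{-a},X_{-a}\}$ and in $\{\widetilde{L}_{-a},X_{-a}\}$ are related, at each fixed level and fixed $|\psi\rangle$, by a matrix that is triangular with unit diagonal in the number of $\widetilde{L}$-factors; it is invertible, so it suffices to prove the statement with $L_{-a}$ replaced by $\widetilde{L}_{-a}$.

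Spanning is then clean. Lemma~\ref{lem:completeness} already writes every vector as a finite sum of $X_{-n}$-descendants of $\hu(1)$-primary states, i.e. $\widehat{V}_+^{p^+,p^-}=U(X_{<0})\,\mathcal{H}^{p^+,p^-}$. Since $[\widetilde{L}_n,X_m]=0$, the subspace $\mathcal{H}^{p^+,p^-}$ is itself a $\widetilde{\mathrm{Vir}}$-module, and on it $\widetilde{L}_0=L_0-L^X_0$ is bounded below with finite-dimensional eigenspaces (the $L^X_0$-eigenvalue is non-positive and its magnitude grows without bound with $r$, so fixing $\widetilde{L}_0$ bounds both the level and $r$). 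A positive-energy Virasoro module is generated by its lowest-weight vectors, which here are exactly $\mathcal{T}$; hence $\mathcal{H}^{p^+,p^-}=U(\widetilde{L}_{<0})\,\mathcal{T}$, and composing with Lemma~\ref{lem:completeness} shows that the vectors $\widetilde{L}_{-1}^{\lambda_1}\cdots X_{-1}^{\mu_1}\cdots|\psi\rangle$ span $\widehat{V}_{+(N)}^{p^+,p^-}$.

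For linear independence I would compute the Gram matrix of the proposed vectors. Because the $X_n$ commute with the $\widetilde{L}_n$ and the vectors of $\mathcal{T}$ are joint primaries, this matrix block-factorizes into a Heisenberg Gram determinant, a coset-Virasoro Shapovalov (Kac) determinant, and the inner-product form restricted to $\mathcal{T}$. The Heisenberg block is non-zero because the level $1-k\neq0$ (the $X$-direction is timelike, so this block is indefinite but non-degenerate), and the form on $\mathcal{T}$ is non-degenerate, being positive-definite once $p^+<-\tfrac{k}{2}$ strictly by the analysis of Section~\ref{sec:cosetunitarity}. The remaining coset-Virasoro block is the heart of the matter: at central charge $\widetilde{c}=3>1$ the Kac determinant is non-zero provided the coset weights $\widetilde{h}=h^{\mathrm{NW}}-h^{X}$ carried by the states of $\mathcal{T}$ are strictly positive, since for $c>1$ a Verma module with $h>0$ is irreducible (indeed unitary). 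Given spanning, this non-degeneracy upgrades the spanning set to a basis.

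The main obstacle is exactly this last point: controlling the sign of the coset weights $\widetilde{h}$, equivalently proving the coset-Virasoro Verma modules have no null vectors, in a setting where the ambient form is indefinite so that one cannot simply invoke positivity of the whole module. This is where the hypothesis $p^+\in(0,-\tfrac{k}{2})$ is indispensable: $p^+>0$ keeps the horizontal $\nw$-module unitary and irreducible, while the strict upper bound $p^+<-\tfrac{k}{2}$ is what keeps $J^+_{-1}|0\rangle$ and its descendants off the null locus (at $p^+=-\tfrac{k}{2}$ this state becomes null and the free generation fails), forcing $\widetilde{h}>0$. The non-semisimplicity of $\nw$ is the reason one cannot argue, as for $SL(2,\mathbb{R})$, that every state is a current descendant of a primary; this is precisely the gap filled by Lemma~\ref{lem:completeness}, and it is what makes the bookkeeping in the spanning and independence steps more delicate than in the semisimple case. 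I would finish by treating separately the vacuum module, where $\widetilde{h}=0$ produces the expected $L_{-1}$-type relation, consistent with the stated range being an open interval.
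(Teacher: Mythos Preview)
Your overall strategy is the paper's: pass to the coset Virasoro $\widetilde L_n=L_n-L^X_n$, use the unitriangular change of basis $\{L_{-a},X_{-a}\}\leftrightarrow\{\widetilde L_{-a},X_{-a}\}$, and reduce linear independence to the non-vanishing of the $\mathrm{Vir}\oplus\hu(1)$ Kac determinant, which in turn hinges on $\widetilde h>0$. Two points deserve correction or completion.

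First, a contextual slip: in this section the theory has already been completed by a unitary $c=22$ sector, so the total $c$ is $26$ and the coset Virasoro has $\widetilde c=25$, not $3$. This does not damage your argument (both values satisfy $\widetilde c>1$, which is all the Kac determinant needs), but the paper's explicit check that $h^c>0$ on every $\hu(1)$-primary is carried out with the correct formula
\[
h^c=-\frac{2p^+(p^-+\tfrac12)}{k}+\frac{(p^++r-p^-)^2}{2(1-k)}+N,
\]
and verifying that this stays strictly positive on the locus $r\geq -N$ exactly when $p^+\in(0,-\tfrac{k}{2})$ is the computation you flag as the ``main obstacle'' but do not perform. The paper does it directly; you should too, since your independence argument and your remark about the endpoint $p^+=-k/2$ both rest on it.

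Second, your spanning argument has a gap. The assertion ``a positive-energy Virasoro module is generated by its lowest-weight vectors'' is not true without extra input: one needs, for instance, a non-degenerate contravariant form to split off the submodule generated by $\mathcal T$. You try to supply this by citing Section~\ref{sec:cosetunitarity} for positive-definiteness on $\mathcal T$, but that section only establishes non-negativity; null vectors are not excluded there. The paper avoids this by reversing the order: it proves linear independence first (via the Kac determinant, once $h^c>0$ is known), then uses the resulting non-degeneracy of the form on $B_{(N)}$ together with the ambient inner product to run an orthogonal-complement induction, showing that anything in $B_{(N)}^\perp$ is killed by all $L_n$ and $X_n$ with $n>0$ and hence lies in $\mathcal T$. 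Your route through Lemma~\ref{lem:completeness} is a legitimate alternative for the Heisenberg half of the decomposition, but for the Virasoro half you still end up needing precisely the non-degeneracy that the paper extracts from the Kac determinant; so your spanning and independence steps are not as decoupled as you present them.
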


\begin{proof}
We first prove that states of the form (\ref{basis}) are linearly independent, and then proceed by showing that they span $\widehat{V}_{+(N)}^{p^+,p^-}$.
The field $\mathcal{E}^X(z)=\frac{1}{2(1-k)}XX(z)$ defines an energy-momentum tensor for a $U(1)$ ($c=1$) theory with current $X(z)$. The Laurent modes $L^X_n$ are given by $L^X_n=\frac{1}{2(1-k)}\sum_{m\in\mathbb{Z}}:X_mX_{n-m}:$.
The conventional coset construction gives a representation of the Virasoro algebra with central charge $c=25$ generated by $L^{c}_n=L_n-L^X_n$ since $L^c_m$ commutes with $X_n$. Since $L^X_n$ are bilinears in modes $X_m$ we might as well switch the generators $L_{-n}$ on the right hand side of (\ref{basis}) to the generators $L^c_{-n}$. Since $L^c_n$ and $X_m$ commute, the states of this form span a highest weight representation of $\mathrm{Vir}\oplus\hu(1)$ with $c=25$. The $L^c_0$-eigenvalues on $\widehat{V}_+^{p^+,p^-}$ are of the form $h^c=-\frac{2p^+(p^-+\frac{1}{2})}{k}+\frac{(p^++r-p^-)^2}{2(1-k)}+N$ with $N\in\mathbb{N}_0$ and $r\in\mathbb{Z}$ such that $r\geq -N$. We immediately see that if $p^-+\frac{1}{2}\geq 0$ then $h^c\geq 0$ (for $k<0$). One checks that with the constraints $k<0$, $p^+\leq-k/2$, $N\geq 0$, $r\geq -N$, $h^c$ as a function of $p^-$ can only change sign if $p^+=-k/2$ (and then only if $r=-N$). In other words, for $0<p^+<-k/2$ the values of $h^c$ are positive.
The Kac determinant of the representation of $\mathrm{Vir}\oplus\hu(1)$ is the product of Kac determinants of the $\hu(1)$ part and the Virasoro part. Since the Kac determinant of the $\hu(1)$ part is always non-zero, and the Virasoro contribution for $c=25$ vanishes only for $h^c\leq0$, it follows that states of the form (\ref{basis}) are indeed linearly independent.

Let $N$ be the total level of a state of the form (\ref{basis}), i.e. there are contributions $\sum_a a(\lambda_a+\mu_a)$ from the oscillators, and $N_\psi$ from $|\psi\rangle$ adding up to $N$. The states of the form (\ref{basis}) with $N=0$ obviously span $\widehat{V}_{+(0)}^{p^+,p^-}$. Assume that the states at level $N-1$ or less span $\widehat{V}_{+(N-1)}^{p^+,p^-}$. Let $B_{(N)}$ denote the linear span of states of the form (\ref{basis}) of level $N$ or less and where $N_\psi<N$. Since the Kac determinant is non-zero there are no null states in $B_{(N)}$, so $\widehat{V}_{+(N)}^{p^+,p^-}\cong B_{(N)}\oplus B_{(N)}^\perp$ where $B_{(N)}^\perp$ is the orthogonal complement of $B_{(N)}$ in $\widehat{V}_{+(N)}^{p^+,p^-}$. Pick an arbitrary $|u\rangle\in B_{(N)}^\perp$. Then for any $|v_{s}\rangle\in V_{+(N-s)}^{p^+,p^-}$, $s>0$, we have
$$0=\left(|u\rangle,L_{-s}|v_s\rangle\right)=\left(L_s|u\rangle,|v_s\rangle\right).$$
Since by assumption the inner product is non-degenerate on
$\widehat{V}_{+(N-s)}^{p^+,p^-}$ for $0<s\leq N$ we conclude that
$L_s|u\rangle=0$ for any $s\in [1,N]$. We also have trivially that
$L_n|u\rangle=0$ for $n>N$ so we conclude that $L_n|u\rangle=0$
for any $n>0$. One shows similarly that $X_n|u\rangle=0$ for any
$n>0$. We have thus shown that
$B_{(N)}^\perp=\mathcal{T}\cap\widehat{V}_{+(N)}^{p^+,p^-}$, and
the lemma follows.\endofproof
\end{proof}

A {\em spurious state} is a linear combination of states of the form (\ref{basis}) with $\lambda\neq 0$.
Recall the result of Goddard and Thorn \cite{Goddard} that when $c=26$, the space of on-shell spurious states (i.e. spurious states $|s\rangle$ satisfying $(L_0-1)|s\rangle=0$) is closed under the action of $L_n$ with $n>0$.
It follows that if $|\psi\rangle=|\phi\rangle+|s\rangle$ is on shell with $|s\rangle$ spurious and on-shell, the condition $L_n|\psi\rangle=0$ for $n>0$ is equivalent to $L_n|\phi\rangle=0$ and $L_n|s\rangle=0$.
Thus a general physical state $|\psi\rangle$ is a combination of a physical state $|\chi\rangle$ of the form (\ref{basis}) with $\lambda=0$ and a physical spurious state $|s\rangle$.
It remains to show: 

\begin{lemma}\label{lem:phys}
    Let $|\chi\rangle$ be a physical state of the form (\ref{basis}) with $\lambda=0$. Then it follows that $|\chi\rangle\in\mathcal{T}$.
\end{lemma}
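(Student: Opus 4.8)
The plan is to exploit the coset decomposition $L_n=L^c_n+L^X_n$ introduced in the proof of Lemma~\ref{lem:basis} and reduce the statement to a no-ghost statement for a single timelike free boson. First I would observe that $L^c_n=L_n-L^X_n$ commutes with every mode $X_m$ and annihilates $\mathcal{T}$ for $n>0$: a state $|\psi\rangle\in\mathcal{T}$ is both Virasoro and $\hu(1)$ primary, hence $L_n|\psi\rangle=0$ and $X_n|\psi\rangle=0$ for $n>0$, so $L^X_n|\psi\rangle=0$ and therefore $L^c_n|\psi\rangle=0$. Consequently any $\lambda=0$ state, being an $X_{-n}$-descendant of $\mathcal{T}$, automatically satisfies $L^c_n|\chi\rangle=0$ for all $n>0$, since $L^c_n$ commutes past the $X_{-m}$ and annihilates the ground state. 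The physical condition $L_n|\chi\rangle=0$ thus collapses to $L^X_n|\chi\rangle=0$ for $n>0$; in other words $|\chi\rangle$ must be a Virasoro highest-weight vector for the $c=1$ algebra generated by the timelike current $X$.

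Next I would use Lemma~\ref{lem:basis} to identify the $\lambda=0$ subspace with the direct sum, over $|\psi\rangle\in\mathcal{T}$, of the free-boson Fock modules $\mathcal{F}_\psi$ built by acting with the $X_{-n}$ on $|\psi\rangle$, each $|\psi\rangle$ being the unique Heisenberg ground state (the $\hu(1)$-primary) of its module. Since $L^c_0$, $X_0$ and the $X$-oscillator number all commute with $L^X_n$, I can decompose $|\chi\rangle$ into their joint eigenspaces, in each of which the primarity and on-shell conditions hold separately. It therefore suffices to fix a coset weight $h^c$, an $X_0$-charge $q$ (equivalently a ground-state weight $h^X_\psi=-\frac{q^2}{2(1-k)}$), and an oscillator level $\ell$, and to show that $L^X$-primarity together with $(L_0-1)|\chi\rangle=0$ forces $\ell=0$, i.e.\ $\mu=0$, which places $|\chi\rangle$ in $\mathcal{T}$.

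The remaining free-boson input is the following. If $\ell>0$ then $|\chi\rangle$ is a Virasoro primary lying strictly above the ground state of $\mathcal{F}_\psi$, so $\mathcal{F}_\psi$ is reducible, and at $c=1$ this forces the ground-state weight to be one of the degenerate values $h^X_\psi=n^2/4\ge 0$. Because $k<0$ gives $h^X_\psi\le 0$, the only survivor is $h^X_\psi=0$, i.e.\ $q=0$. In that sector the on-shell condition reads $h^c+\ell=1$, so $\ell=1-h^c<1$ because $h^c>0$ (established in the proof of Lemma~\ref{lem:basis} precisely for $0<p^+<-\frac{k}{2}$), contradicting $\ell\ge 1$. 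Hence $\ell=0$ in every sector and $|\chi\rangle\in\mathcal{T}$.

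I expect the main obstacle to be the degenerate $q=0$ sector rather than the reduction itself. For $q\neq 0$ the timelike weight $h^X_\psi<0$ avoids all $c=1$ Kac degeneracies, the Fock module is irreducible, and its only Virasoro primary is the ground state, so there is nothing to prove. When $q=0$, however, the timelike Fock module genuinely acquires extra Virasoro primaries---for instance $X_{-1}|\psi\rangle$ is annihilated by every $L^X_n$, $n>0$, once $q=0$---and these are not removed by primarity alone. Controlling them is exactly where the strict inequality $p^+<-\frac{k}{2}$ enters: it guarantees $h^c>0$, and the on-shell condition then excludes any $\ell\ge 1$. This is the one point at which the open-interval hypothesis of Lemma~\ref{lem:basis} is indispensable, and I would flag it as the crux of the argument.
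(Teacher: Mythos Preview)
Your argument is correct and follows essentially the same route as the paper: both use the coset splitting $L_n=L^c_n+L^X_n$ to reduce the physical-state condition to $\mathrm{Vir}^X$-primarity in the timelike $c=1$ Fock module, dispose of the $q\neq 0$ case by irreducibility, and in the degenerate $q=0$ case invoke the on-shell bound. The only cosmetic difference is that you appeal directly to the inequality $h^c>0$ from Lemma~\ref{lem:basis} to force $\ell=0$, whereas the paper reproves the equivalent estimate by hand and phrases the conclusion as ``$|\chi\rangle$ is a right-border state''.
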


\begin{proof}
    Fix a state $|\varphi\rangle\in\mathcal{T}$ and denote the $\hu(1)$ Verma module with highest weight state $|\varphi\rangle$ by $V^X_\varphi$, such that $|\chi\rangle\in V^X_\varphi$. Call the corresponding $c=1$ Virasoro Verma module $V^\mathrm{Vir^X}_\varphi$, and let $L_0|\varphi\rangle=h_\varphi|\varphi\rangle$. If $h_\varphi\neq 0$, the module $V^\mathrm{Vir^X}_\varphi$ is irreducible, and it is well known that it is isomorphic as a graded vector space to $V^X_\varphi$ (with $L_0$-grading in both cases). Since $\widehat{V}_+^{p^+,p^-}$ is a representation of $\mathrm{Vir}^c\oplus\mathrm{Vir}^X$ any highest weight state must simultaneously be a $\mathrm{Vir}^c$ and a $\mathrm{Vir}^X$ highest weight state, so any physical state must in particular be annihilated by all operators $L^X_n$ with $n>0$. It follows that if $h_\varphi\neq 0$, the only state in $V^X_\varphi$ that may be physical is $|\varphi\rangle$, so $|\chi\rangle\in\mathcal{T}$. If $h_\varphi=0$ we must do a bit more work. Recall that $h_\varphi=\frac{(p^+-p^-+r)^2}{2(1-k)}$ where $r\in\mathbb{Z}$, and if $|\varphi\rangle$ is at level $N$ we have the constraint $r\geq -N$. In fact there is only one state in $\widehat{V}_{+(N)}^{p^+,p^-}$ with $r=-N$ namely $(J^+_{-1})^N|p^+,p^-\rangle$, the state on the right border in the $(N+1)$'st row in the following figure.
\begin{center}
\begin{picture}(270,180)
    \put(0,0)           {\includegraphics{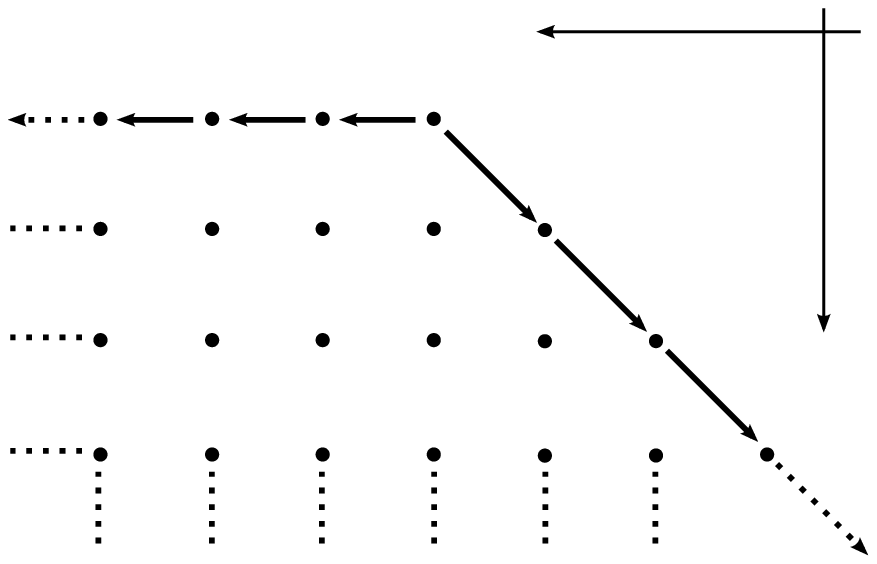}}
    \put(150,160)       {$J_0\ (r)$}
    \put(240,65)        {$L_0\ (N)$}
    \put(105,135)       {\small $J^-_0$}
    \put(73,135)        {\small $J^-_0$}
    \put(40,135)        {\small $J^-_0$}
    \put(145,110)       {\small $J^+_{-1}$}
    \put(172,82)        {\small $J^+_{-1}$}
    \put(205,50)        {\small $J^+_{-1}$}
    \put(128,130)       {\scriptsize $|p^+,p^-\rangle$}
\end{picture}
\end{center}
\end{proof}
A simple calculation shows that all right-border states 
	lie in $\mathcal{T}$, and thus have positive norm.
Now, assume $|\chi\rangle$ is a physical state of the form (\ref{basis}) at level $N$ with $h_\varphi=0$. This implies $p^+-p^-=P$, i.e. $p^-=p^+-P$, for some integer $P\leq N$.  The on-shell condition reads
    $$-\frac{2p^+(p^+-P+\frac{1}{2})}{k}\leq 1-N$$
implying
    $$P\geq p^++\frac{1}{2}+\frac{l}{2p^+}(N-1).$$
The conditions $p^+>0$, $2p^+/|k|<1$ then gives $P>N-1$, so $P=N$. Since the right-border states lie in $\mathcal{T}$ this finishes the proof of the lemma.\endofproof

There is a one-parameter family of automorphisms $\{\theta_s\}_{s\in\mathbb{Z}}$ of $\hnw$ called {\em spectral flow}, acting on our basis of choice as
\begin{eqnarray}
	\theta_s: J^{\pm}_n &\mapsto & J^\pm_{n\pm s}\label{sf1}\\
	\theta_s: J_n & \mapsto & J_n\label{sf2}\\
	\theta_s: T_n & \mapsto & T_n-is\frac{k}{2}\delta_{n,0}.\label{sf3}
\end{eqnarray}
Via the Sugawara construction $\theta_s$ induces automorphisms of the Virasoro algebra acting on the Sugawara generators as
\begin{equation}
	\theta_s: L_n\mapsto L_n -is J_n + is\frac{2}{k}T_n + \frac{s^2}{2}\delta_{n,0}
\end{equation}
If $\widehat{V}$ is a $\hnw$ module where $\rho_V$ is the representation morphism (i.e. we write the action of $X\in\hnw$ on $v\in\widehat{V}$ as $\rho_V(X)v$), denote by ${}^{(s)}\widehat{V}$ the $\hnw$ module with representation morphism $\rho^{(s)}_V:=\rho_V\circ\theta_s$.
We will be interested in extending the results obtained so far for ${}^{(s)}\widehat{V}^{p^+,p^-}_{\pm}$, ${}^{(s)}\widehat{V}^{0,p^-}_\alpha$, and ${}^{(s)}\widehat{V}^{0,0}$, which for brevity will be referred to as ``flowed representations''.

We first make two observations regarding the flowed representations.
\begin{remark}~\\[-1cm]
\begin{itemize}
	\item[(i)] If $T_0$ has eigenvalue $ip^+$ on the module $V$, the $T_0$ eigenvalue on ${}^{(s)}V$ will be $i(p^+-s\frac{k}{2})$. Thus even with the constraint $p^+\in (0,-\frac{k}{2})$ for $\widehat{V}^{p^+,p^-}_+$ and the analogous constraint on $\widehat{V}^{p^+,p^-}_-$, including all flowed representations indicated above the spectrum of $T_0$ becomes $i\mathbb{R}$ (note that to get the eigenvalues $s\frac{k}{2}$ we must include the flowed representations ${}^{(s)}\widehat{V}^{0,p^-}_\alpha$ or ${}^{(s)}\widehat{V}^{0,0}$, however only the former lead to a unitary space of physical states). This is the spectrum obtained by light cone quantization \cite{Cheung}.
	\item[(ii)] We have ${}^{(-1)}\widehat{V}^{p^+,p^-}_+\cong\widehat{V}^{p^++\frac{k}{2},p^-}_-$. This follows from the fact that $\widehat{V}^{p^+,p^-}_\pm$ are Verma modules, and therefore determined by their highest resp. lowest weight vectors. The property $\theta_s\circ\theta_t=\theta_{s+t}$ implies that concentrating on $\widehat{V}^{p^+,p^-}_+$ ($\widehat{V}^{p^+,p^-}_-$) it is enough to consider flowed representations with $s>0$ ($s<0$).
\end{itemize}
\end{remark}

\begin{lemma}
\label{lem:flowbasis}

	Fix $s\in\mathbb{N}$. States of the form (\ref{basis}) in ${}^{(s)}\widehat{V}^{p^+,p^-}_+$ for all levels $N$ span ${}^{(s)}\widehat{V}^{p^+,p^-}_+$.
\end{lemma}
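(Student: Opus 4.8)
The plan is to run the argument of Lemma~\ref{lem:basis} again, correcting for the two things spectral flow changes. The first is the grading. Because $\theta_s$ sends $L_0\mapsto L_0-isJ_0+is\frac{2}{k}T_0+\frac{s^2}{2}$, the flowed $L_0$ is not bounded below on ${}^{(s)}\widehat{V}_+^{p^+,p^-}$, so I would not grade by conformal weight. Instead I grade the underlying space (which is literally $\widehat{V}_+^{p^+,p^-}$) by the original excitation level $N\in\mathbb{N}_0$ and the $J_0$-label $r$; these pieces are finite dimensional and $N$ is bounded below. From (\ref{sf1})--(\ref{sf3}) one has $\theta_s(X_n)=X_n$ for $n\neq0$ and $\theta_s(L_{-n})=L_{-n}-isJ_{-n}+is\frac{2}{k}T_{-n}$, and a one-line check gives $[J_0,\theta_s(L_{-n})]=[J_0,\theta_s(X_{-n})]=0$ with both operators raising $N$ by $n$. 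Hence the flowed states of the form (\ref{basis}) respect the $(N,r)$-grading, and it is enough to prove spanning in each graded piece by induction on $N$; the base case $N=0$ is immediate, since every level-$0$ state is killed by all flowed $L_n,X_n$ with $n>0$ and so is a flowed primary.

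The second point is that the pairing step of Lemma~\ref{lem:basis} survives. The Hermitian form is a property of the vector space and is not altered by twisting the action; moreover $\theta_s$ commutes with conjugation, $\theta_s(O)^\dagger=\theta_s(O^\dagger)$ (for instance $\theta_s(J^+_n)^\dagger=J^-_{-n-s}=\theta_s(J^-_{-n})$), so the same form is invariant for ${}^{(s)}\widehat{V}_+^{p^+,p^-}$ and the flowed $L_{-s}$ and $L_s$ (resp.\ $X_{-s}$ and $X_s$) stay mutually adjoint. Consequently the non-degeneracy of the form on the lower graded pieces, already shown inside Lemma~\ref{lem:basis}, is available verbatim.

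With these two observations the inductive step copies Lemma~\ref{lem:basis}. Let $B_{(N)}$ be the span of flowed states of the form (\ref{basis}) with $N_\psi<N$. For $u\in B_{(N)}^{\perp}$ and any $v$ at level $N-s$ we get $0=(u,\theta_s(L_{-s})v)=(\theta_s(L_s)u,v)$, so non-degeneracy at level $N-s$ forces $\theta_s(L_s)u=0$, and similarly $\theta_s(X_s)u=0$; thus $B_{(N)}^{\perp}$ is exactly the set of level-$N$ flowed primaries, which are themselves states of the form (\ref{basis}), and together with $B_{(N)}$ they fill the level-$N$ piece.

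The real obstacle is the one ingredient this scheme still requires: that the form be non-degenerate on $B_{(N)}$ itself, equivalently that no flowed primary is also a flowed $L,X$-descendant. Only then is $B_{(N)}\oplus B_{(N)}^{\perp}$ the whole graded piece; a single such null singular descendant would break spanning. In Lemma~\ref{lem:basis} this was guaranteed by the positivity $h^c>0$ of the coset ($c=25$) Virasoro weight. Under flow, however, the coset generator moves: a short computation gives $\theta_s(L^c_n)=L^c_n-is\frac{2-k}{2(1-k)}Y_n$ with $Y_n=J_n+(1-\frac{2}{k})T_n$ commuting with every $X_m$, so the flowed weight $h^c$ picks up an $r$-dependent real shift and is no longer bounded below. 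I would therefore replace the $h^c>0$ input with an explicit flowed Kac-determinant check: evaluate $h^c$ on the flowed primaries and verify that, in the allowed region ($k<0$, $p^+\in(0,-\frac{k}{2})$, $p^-$ real, $r\in\mathbb{Z}_{\ge-N}$), it stays off the discrete set $\{1-\frac{m^2}{4}:m\in\mathbb{Z}_{\ge2}\}$ of degenerate $c=25$ Virasoro weights. This weight bookkeeping is the delicate part and is the analogue of the corresponding step for $SL(2,\mathbb{R})$ treated in \cite{MalOog1}.
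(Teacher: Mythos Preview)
Your route is genuinely different from the paper's, and the place you yourself flag as ``the real obstacle'' is exactly where your argument is incomplete. You propose to re-run the proof of Lemma~\ref{lem:basis} in the flowed module and therefore need a fresh Kac-determinant check for the flowed coset weight $h^c$. You correctly compute that $\theta_s(L^c_n)=L^c_n-is\frac{2-k}{2(1-k)}Y_n$ and that the flowed $h^c$ is no longer bounded below, but you then do not carry out the verification that $h^c$ avoids the degenerate $c=25$ values. Without that, the step ``$B_{(N)}$ has no null vectors, hence $\widehat{V}_{+(N)}=B_{(N)}\oplus B_{(N)}^{\perp}$'' is unsupported; if some flowed $h^c$ hits a degenerate value you get a null descendant and the induction breaks. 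So as written the proposal has a gap precisely at the point you identify as delicate.

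The paper avoids this computation entirely. Its proof does not re-run Lemma~\ref{lem:basis}; it reduces to it. The two observations are: (i) $\theta_s(X_n)=X_n$ for $n\neq0$, so the space $\mathcal{H}$ of $\hu(1)$ primaries is literally the same in the flowed and unflowed modules; and (ii) writing $\theta_s(L_n)=L_n+\alpha X_n+\beta Y_n$ and using $[X_m,Y_n]=0$, one sees that $Y_{-n}$ maps $\hu(1)$ primaries to $\hu(1)$ primaries. Hence the spans generated by $\{L_{-n},X_{-n}\}$ and by $\{\theta_s(L_{-n}),X_{-n}\}$ acting on $\hu(1)$ primaries coincide, and since the underlying vector space of ${}^{(s)}\widehat{V}^{p^+,p^-}_+$ is that of $\widehat{V}^{p^+,p^-}_+$, the spanning statement follows directly from Lemma~\ref{lem:basis}. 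No new Kac-determinant input is needed. If you want to salvage your approach, the missing piece is an honest check that the flowed $h^c$ never lands on the $c=25$ degenerate set for the allowed $(k,p^+,p^-,r,N)$; but the paper's change-of-basis argument is both shorter and more robust.
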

\begin{proof}
	Since $\theta_s(X_n)=X_n$ for $n\neq 0$, a state in $\widehat{V}^{p^+,p^-}$ is $\hu(1)$ primary iff the same vector is also $\hu(1)$ primary in the flowed representation. By rewriting $\theta_s(L_n)$ in terms of $X_n$ and $Y_n$ and using $[X_m,Y_n]=0$, so $Y_{-n}|\psi\rangle$ is $\hu(1)$ primary if $|\psi\rangle$ is, the Lemma follows immediately since the underlying vector space of $\widehat{V}^{p^+,p^-}_+$ coincides with that of ${}^{(s)}\widehat{V}^{p^+,p^-}_+$.\endofproof
\end{proof}
Next, observe that $\theta_s$ preserves the Hermiticity properties of our generators. Since ${}^{(s)}\widehat{V}^{p^+,p^-}_+$ has a cyclic vector, namely the highest weight vector of $\widehat{V}^{p^+,p^-}_+$, we conclude that the inner products on $\widehat{V}^{p^+,p^-}_+$ and ${}^{(s)}\widehat{V}^{p^+,p^-}$ coincide. If we can show that a state of the form (\ref{basis}) in ${}^{(s)}\widehat{V}^{p^+,p^-}_+$ with $\lambda=0$ lies in $\mathcal{T}$ we have managed to show a no-ghost theorem for the flowed representation.

If we can show an analogue of Lemma \ref{lem:phys} for ${}^{(s)}\widehat{V}^{p^+,p^-}_+$ then we have managed to show a no-ghost theorem for the flowed representation.
\begin{lemma}\label{lem:flowphys}
    Let ${}^{(s)}\mathcal{T}\subset{}^{(s)}\widehat{V}^{p^+,p^-}_+$, $s\in\mathbb{N}$, be the subspace spanned by vectors annihilated by all $X_n$ and $L_n$ for $n>0$, i.e. all vectors in $\widehat{V}^{p^+,p^-}_+$ annihilated by $\theta_s(X_n)$ and $\theta_s(L_n)$ for $n>0$, and let $|\chi\rangle\in{}^{(s)}\widehat{V}^{p^+,p^-}_+$ be a physical state of the form (\ref{basis}) with $\lambda=0$. Then $|\chi\rangle\in{}^{(s)}\mathcal{T}$.
\end{lemma}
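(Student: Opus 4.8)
The plan is to transcribe the proof of Lemma~\ref{lem:phys} with every Sugawara mode $L_n$ replaced by its spectral-flow image $\theta_s(L_n)$, while leaving the timelike current modes $X_n$ (for $n\neq 0$) untouched. The structural fact that makes this possible is that $\theta_s$ respects the coset splitting $L_n=L^c_n+L^X_n$: since $\theta_s$ is an algebra automorphism and $L^c_n$ commutes with every $X_m$, we get $[\theta_s(L^c_n),X_m]=\theta_s([L^c_n,X_m])=0$ (using $\theta_s(X_m)=X_m$ for $m\neq0$ and $\theta_s(X_0)=X_0+is\tfrac{k}{2}$), so $\theta_s(L^c_n)$ still commutes with the whole $\hu(1)$; and $\theta_s(L^X_n)$ is again the $c=1$ Sugawara tensor of $X$, merely built from the shifted zero mode. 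Thus on ${}^{(s)}\widehat{V}^{p^+,p^-}_+$ the flowed operators $\theta_s(L^c_n)$ and $\theta_s(L^X_n)$ generate two commuting Virasoro algebras of central charges $25$ and $1$, and the $\mathrm{Vir}^c\oplus\mathrm{Vir}^X$ decomposition underlying Lemma~\ref{lem:phys} survives intact.

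I would then assemble the two ingredients that drive the argument. First, shifting the $X_0$-eigenvalue by $is\tfrac{k}{2}$ produces the flowed $c=1$ weight $\tilde h_\varphi=\frac{(p^+-p^-+r-sk/2)^2}{2(1-k)}$, a perfect square that now vanishes precisely when $r=p^--p^++\tfrac{sk}{2}$. Second, the flowed on-shell condition follows from $\theta_s(L_0)=L_0-isJ_0+is\tfrac{2}{k}T_0+\tfrac{s^2}{2}$. A convenient simplification is that each $X_{-n}$ commutes with both $J_0$ and $T_0$, so any state $|\chi\rangle$ of the form (\ref{basis}) with $\lambda=0$ carries the same $J_0$- and $T_0$-eigenvalues, hence the same $r$ and the same flowed coset weight $h^c_\varphi$ (the $\theta_s(L^c_0)$-eigenvalue), as the primary $|\varphi\rangle\in{}^{(s)}\mathcal{T}$ it sits above, differing only by its $\hu(1)$-descendant level $m$. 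The total on-shell condition therefore collapses to $h^c_\varphi+\tilde h_\varphi+m+h^{\mathrm{CFT}}=1$, with $h^{\mathrm{CFT}}\geq0$ the contribution of the auxiliary $c=22$ theory.

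With these in place the dichotomy of Lemma~\ref{lem:phys} applies. If $\tilde h_\varphi\neq0$, the $c=1$ Verma module on $|\varphi\rangle$ is irreducible and isomorphic as a graded vector space to $V^X_\varphi$; a physical state is a joint highest weight vector for the two commuting Virasoro algebras, hence annihilated by all $\theta_s(L^X_n)$ with $n>0$, and the only such vector in $V^X_\varphi$ is $|\varphi\rangle$ itself, forcing $m=0$ and $|\chi\rangle\in{}^{(s)}\mathcal{T}$. If $\tilde h_\varphi=0$ the on-shell condition reduces to $h^c_\varphi+m+h^{\mathrm{CFT}}=1$, so $m=0$ will follow as soon as one establishes the strict positivity $h^c_\varphi>0$, the flowed analogue of the positivity of $h^c$ proved in Lemma~\ref{lem:basis}.

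The main obstacle is exactly this positivity. Substituting $r=p^--p^++\tfrac{sk}{2}$ into $h^c_\varphi$, one must check that the $s$-dependent corrections in $\theta_s(L_0)$ conspire with the bound $2p^+/|k|<1$ to keep $h^c_\varphi$ strictly positive; a short computation shows that in the extremal case $r=-N_\varphi$ (which minimizes $h^c_\varphi$ over the allowed range $r\geq -N_\varphi$) the weight takes the form $N_\varphi\bigl(1-\tfrac{2p^+}{|k|}\bigr)+\tfrac{2p^+(p^++1/2)}{|k|}+2sp^+\tfrac{1-k}{|k|}+\tfrac{s^2(1-k)}{2}$, which is manifestly positive for $0<p^+<-\tfrac{k}{2}$ and $s\in\mathbb{N}$. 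Verifying this bound, and confirming along the way that the resulting $m=0$ vector is genuinely annihilated by the $X_n$ rather than merely by the $\theta_s(L^X_n)$, is where I expect the care to be needed; the remainder is a mechanical rerun of the unflowed proof.
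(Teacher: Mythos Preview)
Your proposal is correct and follows essentially the same route as the paper. Both arguments handle the $\tilde h_\varphi\neq 0$ case identically via irreducibility of the $c=1$ Verma module, and for $\tilde h_\varphi=0$ both substitute $p^-=p^+-\tfrac{sk}{2}+r$ into the flowed on-shell condition and exploit $0<\tfrac{2p^+}{|k|}<1$ together with $r\geq -N_\varphi$; the paper phrases this as ``discard strictly positive terms to force $r=-N$,'' while you package the same inequality as $h^c_\varphi>0\Rightarrow m<1\Rightarrow m=0$, but the computation (and your explicit lower bound for $h^c_\varphi$) is identical. Your closing worry about whether the $m=0$ vector is annihilated by the $X_n$ is unnecessary: once $m=0$ you have $|\chi\rangle=|\varphi\rangle$, and $|\varphi\rangle$ is by construction the base vector in ${}^{(s)}\mathcal{T}$.
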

\begin{proof}
    The strategy of the proof is equivalent to the proof of Lemma \ref{lem:phys}, and we need to investigate the situation where the physical state $|\chi\rangle$ is a descendent of  $|\varphi\rangle$ with $\mathrm{Vir}^X$ weight $h_\varphi=0$. In ${}^{(s)}\widehat{V}^{p^+,p^-}_+$ we have
    $$h_\varphi=\frac{(p^+-p^--s\frac{k}{2}+r)^2}{2(1-k)},$$
where again  $r\geq -N$ if $|\varphi\rangle$ is of level $N$. If $h_\varphi=0$ we have $p^-=p^+-s\frac{k}{2}+r$.
The $L_0$-eigenvalues of states in ${}^{(s)}\widehat{V}^{p^+,p^-}_+$ are of the form
    $$h=-\frac{2p^+(p^-+\frac{1}{2})}{k}+s(p^--r)-s\frac{2p^+}{k}+\frac{s^2}{2}+N,$$
and the mass-shell condition states that $h\leq 1$. Inserting the expression for $p^-$ in the mass-shell condition we get
\begin{equation}
    -\frac{2p^+}{k}\left(p^+-s\frac{k}{2}+\frac{1}{2}+r\right)+s\left(p^+-s\frac{k}{2}-\frac{2p^+}{k}+\frac{s}{2}\right)\leq 1-N.
\end{equation}
Discarding strictly positive terms on the left hand side we get the condition $-\frac{2p^+}{k}r< 1-N$, and since $-\frac{2p^+}{k}\in (0,1)$ we conclude that $r=-N$, so it is again a ``right-border'' state. Such a state is also in ${}^{(s)}\widehat{V}^{p^+,p^-}_+$ annihilated by all $X_n$ and $L_n$ such that $n>0$, and it follows that $|\chi\rangle\in{}^{(s)}\mathcal{T}$.\endofproof
\end{proof}

In section \ref{sec:lowNunitarity} we saw that unitarity for $\widehat{V}^{0,p^-}_\alpha$ follows immediately from the on-shell condition. We now turn to the corresponding flowed representations ${}^{(s)}\widehat{V}^{0,p^-}_\alpha$. Note first that Lemma \ref{lem:basis} also holds in the representations $\widehat{V}^{0,p^-}_\alpha$ (the relevant $L^c_0$-eigenvalues $h^c$ are trivially positive), and hence also in the flowed representations ${}^{(s)}\widehat{V}^{0,p^-}_\alpha$ respectively ${}^{(s)}\widehat{V}^{0,0}$ by the same argument as in Lemma \ref{lem:flowbasis}. We are thus left to show the analogue of Lemma \ref{lem:flowphys} for these representations, which again reduces to analyzing $\mathrm{Vir}^X$ descendents of weight $0$ primaries $|\varphi\rangle$. In the representation ${}^{(s)}\widehat{V}^{0,p^-}_\alpha$ we get $h_\varphi=-\frac{(p^--n+s\frac{k}{2})^2}{2(1-k)}$, and $h_\varphi=0$ implies $p^--n=-s\frac{k}{2}$. A corresponding $L_0$-eigenvalue then takes the form
$$h=-\frac{\alpha^2}{k} + \frac{s^2}{2}\left(1-k\right)+N,$$
and since $1-k>0$ the on-shell condition implies $N=0$, so all physical states lie in the horisontal submodule $V^{0,p^-}_\alpha$ which is unitary.

We summarize the results above in the following
\begin{theorem}
    The space of physical states for the bosonic string on the Nappi-Witten group with level $k$ is unitary if $k<0$ in the following cases
    \begin{itemize}
        \item the space ${}^{(s)}\widehat{V}_+^{p^+,p^-}$ when $s\in\mathbb{N}_0$, $p^+\in(0,-\frac{k}{2})$, $p^-\in\mathbb{R}$
        \item the space ${}^{(s)}\widehat{V}_-^{p^+,p^-}$ when $s\in-\mathbb{N}_0$, $p^+\in(0,\frac{k}{2})$, $p^-\in\mathbb{R}$
        \item the space ${}^{(s)}\widehat{V}_\alpha^{0,p^-}$ when $s\in\mathbb{Z}$, $\alpha,p^-\in\mathbb{R}$
    \end{itemize}
\end{theorem}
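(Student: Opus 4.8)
The plan is to read the theorem as the assembly of the two-part program of Sections \ref{sec:coset} and \ref{sec:noghost}, applied uniformly to each of the three families. The governing idea is that the norm of an \emph{arbitrary} physical state reduces to the norm of a single representative lying in the timelike coset module, whose non-negativity is the content of Section \ref{sec:cosetunitarity}. I would first treat the un-flowed highest-weight case $\widehat{V}_+^{p^+,p^-}$ with $p^+\in(0,-\frac{k}{2})$, $k<0$, since every other case is obtained from it by spectral flow or by an essentially identical argument.

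For this base case I would invoke the Goddard--Thorn closure result, valid because the theory is completed to $c=26$: an on-shell state decomposes as $|\psi\rangle=|\chi\rangle+|s\rangle$ with $|\chi\rangle$ a physical state of the form (\ref{basis}) with $\lambda=0$ and $|s\rangle$ a physical spurious state. By Lemma \ref{lem:phys}, $|\chi\rangle\in\mathcal{T}$, so $|\chi\rangle$ is $\hu(1)$-primary and lies in $\mathcal{H}^{p^+,p^-}$, whence $\langle\chi|\chi\rangle\geq 0$ by Section \ref{sec:cosetunitarity}. To see that $|s\rangle$ drops out, write $|s\rangle=\sum_{n>0}L_{-n}|\xi_n\rangle$; since any physical state is annihilated by every $L_n$ with $n>0$, the inner product of $|s\rangle$ with any physical state vanishes, so in particular $\langle s|s\rangle=0$ and $\langle\chi|s\rangle=0$. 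Hence $\langle\psi|\psi\rangle=\langle\chi|\chi\rangle\geq 0$, which settles unitarity for $\widehat{V}_+^{p^+,p^-}$.

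I would then propagate this to the other families. For the flowed highest-weight modules ${}^{(s)}\widehat{V}_+^{p^+,p^-}$ with $s\in\mathbb{N}$, Lemma \ref{lem:flowbasis} provides a spanning set of the form (\ref{basis}), the observation that $\theta_s$ preserves Hermiticity identifies the inner products on $\widehat{V}_+^{p^+,p^-}$ and ${}^{(s)}\widehat{V}_+^{p^+,p^-}$, and Lemma \ref{lem:flowphys} plays the role of Lemma \ref{lem:phys}; the argument of the previous paragraph then runs verbatim with $\mathcal{T}$ replaced by ${}^{(s)}\mathcal{T}$. The lowest-weight family ${}^{(s)}\widehat{V}_-^{p^+,p^-}$, $s\in-\mathbb{N}_0$, is covered by the symmetric versions of all the lemmas (already stated to hold with obvious modifications), and the remark preceding Lemma \ref{lem:flowbasis} shows that restricting to $s<0$ loses nothing. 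For the continuous series ${}^{(s)}\widehat{V}_\alpha^{0,p^-}$ I would instead use the direct $L_0$-computation: the weight-zero primaries force an $L_0$-eigenvalue $h=-\frac{\alpha^2}{k}+\frac{s^2}{2}(1-k)+N$, and since $1-k>0$ the mass-shell condition forces $N=0$, placing all physical states in the unitary horizontal submodule $V_\alpha^{0,p^-}$.

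The main obstacle I anticipate is not in this final assembly, which is formal, but in the clean separation underpinning it: Lemmas \ref{lem:phys} and \ref{lem:flowphys} must genuinely confine every $\lambda=0$ physical state to $\mathcal{T}$ (resp.\ ${}^{(s)}\mathcal{T}$), and this rests on the ``right-border'' analysis. When the $\mathrm{Vir}^X$ weight $h_\varphi$ vanishes the $\hu(1)$ Verma module degenerates, and one must show the only surviving physical vectors are the $r=-N$ states $(J^+_{-1})^N|p^+,p^-\rangle$, which are themselves $\hu(1)$- and Virasoro-primary. The strict bound $-\frac{2p^+}{k}\in(0,1)$, equivalent to $p^+\in(0,-\frac{k}{2})$, is exactly what forces $r=-N$ via the mass-shell inequality; verifying that this closes the argument in every flowed sector, where the spectral-flow shift enters both $h_\varphi$ and the $L_0$-spectrum, is the one place where a genuine computation rather than bookkeeping is needed.
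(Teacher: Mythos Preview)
Your proposal is correct and follows essentially the same approach as the paper: the theorem is presented there as a summary of the preceding results, and your assembly---Goddard--Thorn decoupling of spurious states, Lemma~\ref{lem:phys} (resp.\ Lemma~\ref{lem:flowphys}) to confine the $\lambda=0$ physical states to $\mathcal{T}$ (resp.\ ${}^{(s)}\mathcal{T}$), the coset unitarity of Section~\ref{sec:cosetunitarity}, and the direct $L_0$ argument for the continuous series---reproduces exactly that logic. Your identification of the right-border analysis under $h_\varphi=0$ as the only genuinely computational step is also on target.
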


\section{Discussion}
In this paper, we presented the proof of a no-ghost theorem for the bosonic string
in Nappi-Witten spacetime. The proof involves two parts.  The first
part consists of showing that the states in a class of representations of
the coset $NW/U(1)$ have non-negative norm, where the $U(1)$ subgroup
corresponds to a timelike direction in the Nappi-Witten
spacetime. The proof of this part follows closely the discussion
of the unitarity of the $SO(2,1)/U(1)$ modules by Dixon \it et al
\rm \cite{Dixon}. In the second part,  physical states satisfying
the Virasoro constrains are shown to lie within the coset modules
if we assume the total Hilbert space of the $NW$ WZW model decomposes completely in a certain set of representations of the $NW$ current algebra.
The proof in this part follows a more direct and clear proof  of
unitarity of the $SU(1,1)$ bosonic string theory given in
\cite{Hwang2} than the one given in \cite{Hwang}.

A crucial ingredient for the first part is that every state in the
current algebra representation can be expanded as a sum
of terms, each of which is obtained by acting a product of
negative mode operators of the timelike generator $J_{-n}-T_{-n}$
($n>0$) on a state annihilated by any $J_{n}-T_{n}$ ($n>0$). One nice property of the
Nappi-Witten group is that we can construct this expansion
explicitly.

In addition to establishing a unitary spectrum one would also like to identify physical configurations corresponding to these states. In particular we expect to find analogues of the long string states in $AdS_3$ \cite{MalOog1} for $p^+\in \frac{k}{2}\mathbb{Z}$, which with our proposal for the representation content would correspond to states in ${}^{(s)}V^{0,p^-}_\alpha$. Another obvious continuation is the construction of modular invariant partition functions. We refer to a future publication for both topics \cite{CCFFH2}.

\section*{Acknowledgement}

This work is initiated and supported in parts by the Research Links Programme of Swedish Research Council under contract No.~348-2008-6049.   G.~C. and Z.~F. are supported in parts by NSFC Grants No.~10535010 and No.~10775068 as well as by 973 National Major State Basic Research and Development of China Grant No.~2007CB815004. E.~C acknowledges support from the National Science Foundation of China under grant No.~0204131361   and Startup grants from Nanjing University as well as 985 Grant No.~020422420100 from the Chinese Government.


\begin{thebibliography}{99}

\bibitem{Hwang} S. Hwang, {\em No-ghost theorem for SU(1,1) string therories},
Nucl. Phys. \textbf{B} \textbf{354}
 (1991) 100.

\bibitem{Hwang2} S. Hwang, {\em Cosets as gauge slices in $SU(1,1)$ strings}, Phys.\ Lett.\ B {\bf 276} (1992) 451, [arXiv:hep-th/9110039]

\bibitem{Evans}
  J.~M.~Evans, M.~R.~Gaberdiel and M.~J.~Perry,
  {\em The no-ghost theorem for AdS(3) and the stringy exclusion principle},
  Nucl.\ Phys.\  B {\bf 535}, 152 (1998)
  [arXiv:hep-th/9806024].

\bibitem{MalOog1}
  J.~M.~Maldacena and H.~Ooguri,
  {\em Strings in AdS(3) and SL(2,R) WZW model. I},
  J.\ Math.\ Phys.\  {\bf 42}, 2929 (2001)
  [arXiv:hep-th/0001053].

\bibitem{MalOog2}
  J.~M.~Maldacena and H.~Ooguri,
  {\em Strings in AdS(3) and the SL(2,R) WZW model. III: Correlation  functions},
  Phys.\ Rev.\  D {\bf 65}, 106006 (2002)
  [arXiv:hep-th/0111180].

\bibitem{MMS:geom}
  J.~M.~Maldacena, G.~W.~Moore and N.~Seiberg,
  {\em Geometrical interpretation of D-branes in gauged WZW models}, 
  JHEP {\bf 0107}, 046 (2001)
  [arXiv:hep-th/0105038].
  
\bibitem{BjHw1}
  J.~Bj\"ornsson and S.~Hwang,
  {\em On the unitarity of gauged non-compact WZNW strings},
  Nucl.\ Phys.\  B {\bf 797}, 464 (2008)
  [arXiv:0710.1050 [hep-th]].

\bibitem{BjHw2}
  J.~Bj\"ornsson and S.~Hwang,
  {\em On the unitarity of gauged non-compact world-sheet supersymmetric WZNW models},
  Nucl.\ Phys.\  B {\bf 812}, 525 (2009)
  [arXiv:0802.3578 [hep-th]].
  
\bibitem{NappiWitten}
  C.~R.~Nappi and E.~Witten, {\em A WZW model based on a nonsemisimple group},
  Phys.\ Rev.\ Lett.\  {\bf 71}, 3751 (1993)
  [arXiv:hep-th/9310112].

\bibitem{Cheung}
  Y.~K.~Cheung, L.~Freidel and K.~Savvidy,
  {\em Strings in gravimagnetic fields},
  JHEP {\bf 0402}, 054 (2004)
  [arXiv:hep-th/0309005].

\bibitem{metsaev}
R.~R.~Metsaev,
{\em Type IIB Green-Schwarz superstring in plane wave Ramond-Ramond  background},
Nucl.\ Phys.\ B {\bf 625}, 70 (2002)
[arXiv:hep-th/0112044].

\bibitem{Blau:2001ne}
M.~Blau, J.~Figueroa-O'Farrill, C.~Hull and G.~Papadopoulos,
{\em A new maximally supersymmetric background of IIB superstring theory},
JHEP {\bf 0201}, 047 (2002)
[arXiv:hep-th/0110242].

\bibitem{Berenstein:2002jq}
D.~Berenstein, J.~M.~Maldacena and H.~Nastase,
{\em Strings in flat space and pp waves from N = 4 super Yang Mills},
JHEP {\bf 0204}, 013 (2002)
[arXiv:hep-th/0202021].

\bibitem{Russo:2002rq}
J.~G.~Russo and A.~A.~Tseytlin,
{\em On solvable models of type IIB superstring in NS-NS and R-R plane wave  backgrounds},
JHEP {\bf 0204}, 021 (2002)
[arXiv:hep-th/0202179].

\bibitem{Dixon} 
L. J. Dixon, M. E. Peskin and J. Lykken, 
{\em N=2 Superconformal Symmetry and $SO(2,1)$ Current Algebra},
Nucl. Phys. \textbf{B} \textbf{325} (1989) 329.

\bibitem{Goddard} 
P. Goddard and C. Thorn, 
{\em Compatibility of the Dual Pomeron with Unitarity and the Absence of Ghosts in the Dual Resonance Model}
Phys. Lett. \textbf{40B} (1972) 235.

\bibitem{Thorn} 
C. Thorn, 
{\em A Detailed Study Of The Physical State Conditions In Covariantly Quantized String Theories}, 
Nucl. Phys. \textbf{B}
 \textbf{286} (1987) 61.

\bibitem{KirKoun}
  E.~Kiritsis and C.~Kounnas,
  {\em String propagation in gravitational wave backgrounds},
  Phys.\ Lett.\  B {\bf 320}, 264 (1994)
  [Addendum-ibid.\  B {\bf 325}, 536 (1994)]
  [arXiv:hep-th/9310202].

\bibitem{Sfetsos:1993rh}
  K.~Sfetsos,
  {\em Gauging a nonsemisimple WZW model},
  Phys.\ Lett.\  B {\bf 324}, 335 (1994)
  [arXiv:hep-th/9311010].

\bibitem{Sfetsos:1993na}
  K.~Sfetsos,
  {\em Exact string backgrounds from WZW models based on nonsemisimple groups},
  Int.\ J.\ Mod.\ Phys.\  A {\bf 9}, 4759 (1994)
  [arXiv:hep-th/9311093].

\bibitem{Olive:1993hk}
  D.~I.~Olive, E.~Rabinovici and A.~Schwimmer,
  {\em A Class of string backgrounds as a semiclassical limit of WZW models},
  Phys.\ Lett.\  B {\bf 321}, 361 (1994)
  [arXiv:hep-th/9311081].

\bibitem{Mohammedi:1993rg}
  N.~Mohammedi,
  {\em On bosonic and supersymmetric current algebras for nonsemisimple groups},
  Phys.\ Lett.\  B {\bf 325}, 371 (1994)
  [arXiv:hep-th/9312182].

\bibitem{FigueroaO'Farrill:1994yf}
  J.~M.~Figueroa-O'Farrill and S.~Stanciu,
  {\em Nonsemisimple Sugawara constructions},
  Phys.\ Lett.\  B {\bf 327}, 40 (1994)
  [arXiv:hep-th/9402035].
  
\bibitem{CCFFH2} G.~Chen, Y--K E. Cheung, Z.~Fan, J.~Fjelstad and S.~Hwang,
{\em in preparation}



\end{thebibliography}
\end{document}